\definecolor{purple}{rgb}{0.75,0.194,0.34}
\let\NAT@parse\undefined
\newtheorem{theo}{Theorem}
\newtheorem{assu}{Assumption}
\newtheorem{lemma}{Lemma}
\newtheorem{remark}{Remark}
\def\BibTeX{{\rm B\kern-.05em{\sc i\kern-.025em b}\kern-.08em
    T\kern-.1667em\lower.7ex\hbox{E}\kern-.125emX}}
\begin{document}
\title{Adaptive Output Feedback MPC with Guaranteed Stability and Robustness}
\author{Anchita Dey*, \IEEEmembership{Graduate Student Member, IEEE}, and Shubhendu Bhasin, \IEEEmembership{Member, IEEE}
\thanks{This paragraph of the first footnote will contain the date on 
which you submitted your paper for review. Anchita Dey is supported by the Prime Minister's Research Fellow Scheme under the Ministry of Education, Government of India.}
\thanks{Anchita Dey and Shubhendu Bhasin are with the department of Electrical Engineering, Indian Institute of Technology Delhi, Hauz Khas, New Delhi, Delhi 110016, India {\tt\small anchita.dey@ee.iitd.ac.in, sbhasin@ee.iitd.ac.in.}}
\thanks{*Corresponding author}
}

\maketitle
\begin{abstract}
This work proposes an adaptive output feedback model predictive control (MPC) framework for uncertain systems subject to external disturbances. In the absence of exact knowledge about the plant parameters and complete state measurements, the MPC optimization problem is reformulated in terms of their estimates derived from a suitably designed robust adaptive observer. The MPC routine returns a homothetic tube for the state estimate trajectory. Sets that characterize the state estimation errors are then added to the homothetic tube sections, resulting in a larger tube containing the true state trajectory. The two-tier tube architecture provides robustness to uncertainties due to imperfect parameter knowledge, external disturbances, and incomplete state information. Additionally, recursive feasibility and robust exponential stability are guaranteed and validated using a numerical example.
\end{abstract}

\begin{IEEEkeywords}
Constrained control, Output feedback and Observers, Predictive control for linear systems, Uncertain systems
\end{IEEEkeywords}

\section{Introduction}
\IEEEPARstart{M}{odel}  predictive control (MPC) is widely used to make optimal control decisions while adhering to hard constraints on system states and inputs. The decisions are made by optimizing a cost function dependent on predicted future states and inputs over some finite time interval.
Typically, the state predictions are obtained by propagating the current state measurement through the system dynamics {\color{black}{\cite[Ch.~2]{kouvaritakis2016model}, \cite{zhunew}}}, thus necessitating exact model knowledge and state measurements. However, obtaining a good model and reliable state information can be expensive and, in some cases, impossible. Further, predictions made using an inexact model and erroneous state measurement lead to errors that can potentially render the classical MPC solution infeasible.

Extensive literature exists on MPC with incomplete knowledge of system dynamics and/or state measurements. Robust tube-based algorithms have been developed in \cite{langson2004robust,rakovic2012homothetic} for systems with known parameters and unknown additive disturbances belonging to a known bounded set. Adaptive MPC methods, in contrast, are proposed for unknown systems with the emphasis on identifying the system \cite{tanaskovic2014adaptive} or learning the parameters \cite{dhar2021indirect,zhu2019constrained,heirung2017dual} and/or the set containing the parameters \cite{lorenzen2019robust,lu2023robust,sasfi2023robust} to improve the transient performance while robustly accounting for any ensuing uncertainty in the design of the constrained optimal control problem (COCP). Few results 
\cite{mayne2006robust,mayne2009robust,kogel2017robust,brunke2021rlo,brunner2018enhancing,de2022robust}
exist that deal with the MPC problem in the absence of full state measurements. A Luenberger observer is used in \cite{mayne2006robust,mayne2009robust,kogel2017robust,brunke2021rlo} to obtain state estimates of a system with additive disturbances and known parameters. The COCP reformulated by leveraging these state estimates returns robust tubes containing the true state trajectory. The method in \cite{brunke2021rlo} and \cite{rosolia2017learning} is based on an iterative learning approach to learn the safe region for a given repetitive task. The theory developed in \cite{brunner2018enhancing} is based on set-valued moving horizon estimation wherein an undisturbed nominal model is used to carry out predictions of nominal states and sets containing the worst-case deviations from the nominal predictions. An interval observer-based approach is proposed in \cite{de2022robust} for linear time-invariant (LTI) systems with known parameters but bounded disturbance and noise. 

The major challenges in relaxing the restrictive assumptions on model knowledge and state measurements are ensuring constraint satisfaction, recursive feasibility, and robust stability. Unlike \cite{langson2004robust, rakovic2012homothetic,lorenzen2019robust,dhar2021indirect,mayne2006robust,mayne2009robust,kogel2017robust,brunke2021rlo, rosolia2017learning,zhu2019constrained,lu2023robust,heirung2017dual,sasfi2023robust,brunner2018enhancing,de2022robust}, some recent works \cite{tanaskovic2014adaptive,ping2020observer,hu2019output,berberich2020data,coulson2019data,anch} attempt to relax both the assumptions. While \cite{tanaskovic2014adaptive} is only applicable for stable systems, \cite{ping2020observer} considers linear parameter varying systems where the current value of the parameters is exactly known. In \cite{hu2019output}, the parametric uncertainty assumed to be norm-bounded by one is handled robustly. 
In \cite{berberich2020data} and \cite{coulson2019data}, the authors represent the system behavior implicitly using a Hankel matrix that is constructed or learnt using sufficiently rich input data prior to running the robust MPC COCP. 

{\color{black}{Unlike the output feedback MPC approach discussed in \cite{hu2019output}
which robustly handles system uncertainties without attempting to learn the parameters,}} or the methods presented in \cite{berberich2020data,coulson2019data} that involve a pre-MPC routine to capture the input-output system behavior, the solution in \cite{anch} is based on online adaptation of system parameters and state via an adaptive observer design, and a suitably formulated COCP with redefined constraints. This approach allows constraint imposition on all states (not just the outputs) and inputs. However, this framework is limited to single-input single-output (SISO) LTI systems and does not provide any stability and robustness guarantees, which restricts its practical applicability.

In this paper, we extend the work of \cite{anch} to parametrically uncertain discrete-time multi-input multi-output (MIMO) LTI systems subject to bounded additive disturbances. The unknown parameters are assumed to belong to a known polytope. Motivated by \cite{suzuki1980discrete,dey2023adaptive}, a robust adaptive observer using a projection modification-based learning law \cite[Sec.~4.11.3]{ioannou2006adaptive} is designed for the MIMO case. In the absence of exact information about the state and parameters, the COCP is reformulated in terms of the estimates obtained from the observer. Instead of using a single invariant set for state estimation errors, as in \cite{anch}, we compute suitable time-dependent sets for tightening the state constraint; this helps increase the initial feasible region for the COCP. Further, it is non-trivial to predict the observer states without complete model information. This issue is overcome by propagating the robust adaptive observer dynamics, which is divided into known and unknown terms; the latter leads to errors in predictions of the observer state 
which are then accounted for by using a homothetic tube \cite{rakovic2012homothetic}. Solving the COCP in a receding horizon fashion \cite{kouvaritakis2016model} results in the homothetic tube to which the sets for state estimation errors are added to obtain a two-tube architecture. The proposed approach is named Robust Adaptive Output Feedback MPC (RAOFMPC).

The proposed framework using the two-tube structure can handle additive disturbances without using state measurement and exact model parameter knowledge and enables constraint imposition on all states and control inputs. Using separate sets for the state estimation errors leads to reduced conservatism {{as compared to \cite{anch}.}} Further, the proposed approach is shown to guarantee recursive feasibility of the COCP, and robust exponential stability \cite[Def.~2]{mayne2006robust} of the MIMO LTI system.

\subsection{Notations} $||\cdot||$ and $||\cdot||_\infty$ denote $2$-norm and $\infty$-norm, respectively. With two sets $\mathcal{C},\;\mathcal{D}\subseteq\mathbb{R}^{n}$, the Minkowski sum $\mathcal{C}\oplus\mathcal{D}\triangleq \{c+d\;|\;c\in\mathcal{C},\;d\in\mathcal{D}\}$ and $c\oplus\mathcal{D}\triangleq \{c+d\;|\; d\in\mathcal{D}\},$ where $c\in\mathbb{R}^{n}$, the Pontryagin difference $\mathcal{C}\ominus\mathcal{D}\triangleq\{c\;|\;c+d\in\mathcal{C}\;\forall d\in\mathcal{D}\}$, and $\text{\textbf{co}} (\mathcal{C})$ represents the convex hull of all elements in $\mathcal{C}$. The notation $\{\alpha(i)\}_{i=c:d}$ where $\alpha(i)$ is a function of $i$, and $c$, $d>c$ are integers, represents the set $\{\alpha(c), \alpha(c+1),...,\alpha(d)\}$. Any matrix $W\succ 0$ $(\succeq 0, \; \prec 0,\;\preceq 0)$ denotes $W$ is symmetric positive-definite (positive semi-definite, negative-definite, negative semi-definite), and $||c||^2_W\triangleq c^\intercal Wc$, where $c$ is a vector. $\mathbb{I}_c^d$ is the set of all integers from $c$ to $d$. The identity matrix and zero matrix are denoted by $I_c\in\mathbb{R}^{c\times c}$ and $0_{c\times d}\in\mathbb{R}^{c\times d}$, respectively. A bounded signal $c$ is represented as $c\in \mathcal{L}_\infty$, and $c_{i|t}$ denotes the value of $c$ at time $t+i$ predicted at time $t$. Kronecker product is denoted by $\otimes$, and eigenvalues by $\lambda(\cdot)$. The superscript $^*$ on any term denotes its optimal value, obtained by solving the COCP. For any $c\in\mathbb R^n$, $c^{(i)}$ denotes its $i^{\text{th}}$ row element. For a matrix $W=[ W_1^\intercal \;\; W_2^\intercal \;...\;W_c^\intercal]^\intercal\in\mathbb{R}^{c\times d}$, where $W_i^\intercal\in\mathbb{R}^d$ $\forall i\in\mathbb{I}_1^c$, the operator $vec(W)\triangleq 
   [ W_1 \;\; W_2 \;...\;W_c]^\intercal\in\mathbb{R}^{cd}$ and $vec^{-1}(\cdot,{\color{black}{\cdot})}$ denotes the inverse operation, i.e., $vec^{-1}\left( [ W_1 \;\; W_2 \;...\;W_c]^\intercal{\color{black}{,d }} \right)\triangleq W$. 

\section{Problem Formulation}
Consider the constrained discrete-time LTI system 
\begin{align}
    &x_{t+1}=Ax_t+Bu_t+d_t,\;\;\;y_t=Cx_t, \label{osys1}\\
    &x_t\in\mathbb{X},\;\;u_t\in\mathbb{U}\;\;\;\;\forall t\in\mathbb{I}_0^\infty ,\label{hc1}
    \end{align}
with the following observable canonical realization \cite[Ch.~4]{chen1984linear}
\begin{gather}\label{A_matrix}\begin{aligned}
&A=\left[
    \begin{array}{c|c}
   {\mathcal{A}}&\begin{array}{cc}
         I_{n-q}\\0_{q\times (n-q) }
    \end{array}
    \end{array}
    \right]\in\mathbb{R}^{n\times n},\;\;B\in\mathbb{R}^{n\times m} \text{ and}\\
    & C=\begin{bmatrix}    I_q & 0_q & ... & 0_q   \end{bmatrix}\in\mathbb{R}^{q\times n},
\end{aligned}\end{gather}
where 
$x_t\in\mathbb{R}^n$, $u_t\in\mathbb{R}^m$, $y_t\in\mathbb{R}^q$ and $d_t\in\mathbb{R}^n$ denote the state, input, output and external disturbance, respectively, at time $t$, and 
$\mathcal{A}\in\mathbb{R}^{n\times q}$ is an unknown matrix used in the observable canonical representation of $A$. The constraint sets on $x_t$ and $u_t$ are represented as $\mathbb{X}$ and $\mathbb{U}$, respectively. The unknown disturbance $d_t$ belongs to a set $\mathbb{D}$. The sets $\mathbb{X}$, $\mathbb{U}$ and $\mathbb{D}$ are assumed to be known convex polytopes containing their respective origins. The unknown terms in $A$ and $B$ are written as parameter vectors 
$a\triangleq vec(\mathcal{A})
\in\mathbb{R}^{qn}$ and $b\triangleq vec(B)\in\mathbb{R}^{mn}$, respectively.
The objective is to design a suitable control input to drive the state of \eqref{osys1} to the origin while satisfying hard constraints \eqref{hc1}, in the presence of the disturbance $d_t$. In the ideal case, when there are no model uncertainties, the following classical MPC framework achieves the objective:
\begin{align*}
\min_{\mu_t}\;J(x_t,\mu_t)\triangleq  \textstyle\sum_{i=0}^{N-1}\left(||x_{i|t}||^2_{{Q}}+||u_{i|t}||_R^2\right) +||x_{N|t}||^2_{\bar{P}}\\
\text{subject to }\; {{x_{0|t}=x_t,\;x_{i+1|t}=Ax_{i|t}+Bu_{i|t}\;\; \forall i\in\mathbb{I}_{0}^{N-1}}} ,\\
  x_{i|t}\in\mathbb{X}\text{, } u_{i|t}\in\mathbb{U}\;\;\forall i\in\mathbb{I}_{0}^{N-1},\text{ and }
 x_{N|t}\in\mathbb{X}_\text{TS}\subseteq\mathbb{X},
\end{align*}
where $\mu_t\triangleq\{u_{0|t}, u_{1|t},...,u_{N-1|t}\}$ is the input sequence, $N$ is the prediction horizon length, $Q$, $\bar{P}\in\mathbb{R}^{n\times n}$, $R\in\mathbb{R}^{m\times m}$ with ${Q}$, $\bar{P},R\succ 0$, and $\mathbb{X}_{\text{TS}}$ is the terminal set. In the absence of knowledge of the system parameters, unavailability of full state measurement, and the presence of external disturbance, it is challenging to formulate a suitable MPC optimization problem and prove stability/convergence and recursive feasibility. 

In this paper, a robust adaptive observer is developed for joint estimation of the system state and parameters while being robust to external disturbances. The proposed COCP is constructed in terms of the state and parameter estimates returned by the observer, and a robust tube-based approach, motivated by the development in \cite{anch}, is used to guarantee stability and recursive feasibility. 

For our convenience in developing the theory, matrices structurally similar to $A$ and $B$ are constructed from parameter vectors, wherever necessary; this operation is performed using the following two functions {\color{black}{$\mathcal{M}_A(\cdot,\cdot)$}} and {\color{black}{$\mathcal{M}_B(\cdot,\cdot)$}}:
{{\begin{align*}
    \mathcal{M}_A:\; \mathbb{R}^{qn} \times {\color{black}{\mathbb{I}_1^\infty}} &\rightarrow \mathbb{R}^{n\times n}\\
      {\color{black}{(\hat{a},n)}} &\mapsto 
\left[
    \begin{array}{c|c}
   {\color{black}{vec^{-1}(\hat{a},n)}}&\begin{array}{cc}
         I_{n-q}\\0_{q\times (n-q) }
    \end{array}
    \end{array}
    \right],\\
    \mathcal{M}_B: \mathbb{R}^{mn}\times {\color{black}{\mathbb{I}_1^\infty }}&\rightarrow \mathbb{R}^{n\times m}\\
    {\color{black}{(\hat{b},n)}}& \mapsto {\color{black}{vec^{-1}(\hat{b},m)}}.
\end{align*}}}

The following standard assumption is considered on the parametric uncertainties \cite{lorenzen2019robust,dhar2021indirect,anch,darup2016computation,lu2023robust}.
\begin{assu}\label{ABa}
The unknown parameter $\psi\triangleq\begin{bmatrix}a^{\intercal} &b^{\intercal} \end{bmatrix}^{\intercal} $ belongs to a set $\Psi\triangleq \text{\textbf{co}} ( \{\psi^{[i]}\;{|}\; i\in\mathbb{I}_1^L \})$$\subseteq \mathbb{R}^{qn+mn}$,
whose vertices  $\psi^{[1]},\psi^{[2]},...,\psi^{[L]}$ are known and $L$ is some finite positive integer, with each element in $\mathbb{M}_\Psi\triangleq {\{} (\mathcal{M}_A  (\hat{a},n  ),\mathcal{M}_B  ( \hat{b},n  ) ) \; {|} \; \begin{bmatrix}\hat{a}^{\intercal} &{\hat{b}}^{\intercal} \end{bmatrix}^{\intercal} \in\Psi,\;\hat{a}\in\mathbb{R}^{qn},\;\hat{b}\in\mathbb{R}^{mn} {\}}$ being {\color{black}{controllable. In addition, corresponding to the user-defined matrices $Q$, $R\succ 0$, $\exists$ a pair $(P,K)$ where $P\in\mathbb{R}^{n\times n}$ and $K\in\mathbb{R}^{m\times n}$ such that $\forall (\widehat{A},\widehat{B})\in\mathbb{M}_\Psi$}}
\begin{align}
 {\color{black}{P-\left( \widehat{A}+\widehat{B}K \right)^{\intercal} P \left( \widehat{A}+\widehat{B}K \right)-Q-K^{\intercal}RK\succeq 0.}} \label{Asmeq}
\end{align}
\end{assu}

Assumption \ref{ABa} is required for constraint tightening, terminal set construction and stability analysis. The pair $(P, K)$ can be computed using linear matrix inequalities \cite{de1999new}.
\section{Robust Adaptive Observer for MIMO LTI Systems}

Let $F\in\mathbb{R}^{n\times n}$ be a Schur stable matrix that is structurally similar to $A$, i.e., $F=\mathcal{M}_A(f,n)$, where
$f\in\mathbb{R}^{qn}$ is user-defined. The plant dynamics in \eqref{osys1} can be rewritten as
\begin{align}\label{osys1r}
    x_{t+1}=Fx_t+Y_t(a-f)+U_tb+d_t,
\end{align}
where $Y_t\triangleq I_n\otimes y_t^{\intercal}\in\mathbb{R}^{n\times qn}$ and $U_t\triangleq I_n\otimes u_t^{\intercal}\in\mathbb{R}^{n\times mn}$. Define two new vectors $p\triangleq \begin{bmatrix}
    (a-f)^{\intercal} & b^{\intercal}
\end{bmatrix}^{\intercal}\in\mathbb{R}^{qn+mn}$, which is unknown, and its estimate $\hat{p}_t\triangleq\begin{bmatrix}(\hat{a}_t-f)^{\intercal}   & \hat{b}_t^{\intercal}   \end{bmatrix}^{\intercal}  \in\mathbb{R}^{qn+mn}$, where $\hat{a}_t$ and $\hat{b}_t$ are the estimates of $a$ and $b$, respectively. Also, define a filter variable $M_t\in\mathbb{R}^{n\times(qn+mn)}$ $\forall t\in\mathbb{I}_0^{\infty}$ satisfying
\begin{align}\label{FilterM}
    M_{t+1}=FM_t+\begin{bmatrix}
        Y_t & U_t
    \end{bmatrix}\;;\;M_0=0_{n\times(qn+mn)}.
\end{align}
Using $M_t$, the state and output of \eqref{osys1r} are expressed as
\begin{align}
    &{x}_t=M_tp+F^t{x}_0+\textstyle\sum_{k=0}^{t-1}F^{t-1-k}d_k, \label{xtrue}\\
    &{y}_t=Cx_t=\phi^{\intercal}_tp+CF^t{x}_0+\eta_t, \label{ytrue}
\end{align}
respectively, where the regressor $\phi_t\triangleq M^{\intercal}_tC^{\intercal}  \in\mathbb{R}^{(qn+mn)\times q}$ and $\eta_t\triangleq$ $ C\sum_{k=0}^{t-1}F^{t-1-k}d_k\in\mathbb{R}^q$. Corresponding to \eqref{xtrue} and \eqref{ytrue}, the adaptive observer is designed as
\begin{align}\label{xhat}
    \hat{x}_t=M_t\hat{p}_t+F^t\hat{x}_0 \text{ and }\hat{y}_t=\phi^{\intercal}  _t\hat{p}_t+CF^t\hat{x}_0,
\end{align}
where $\hat{x}_t$ denotes the observer state and $\hat{y}_t$ denotes the observer output. {\color{black}{The filter variable $M_t$ helps obtain the linear regression forms \eqref{ytrue}, \eqref{xhat} that are amenable for designing a suitable parameter estimation law for $\hat{p}_t$, which in turn dictates the trajectories of $\hat{x}_t$ and $\hat{y}_t$ following \eqref{xhat}.}} 
The adaptive observer dynamics obtained using \eqref{FilterM} and \eqref{xhat} is given by
\begin{align}
    \hat{x}_{t+1}\;&=F\hat{x}_t+\begin{bmatrix}
        Y_t &U_t
    \end{bmatrix}\hat{p}_{t+1}\nonumber\\&+\textstyle\sum_{k=0}^{t-1}F^{t-k}\begin{bmatrix}
        Y_k &U_k
    \end{bmatrix}(\hat{p}_{t+1}-\hat{p}_t)\;\;\;\forall t\in\mathbb{I}_0^\infty. \label{adob0}
\end{align}  

\subsection{Adaptation law for the observer}
Let $\tilde{x}_t\triangleq x_t-\hat{x}_t\in\mathbb{R}^n$ be the state estimation error, $\tilde{y}_t\triangleq
 y_t-\hat{y}_t\in\mathbb{R}^q$ the output estimation error and $\tilde{p}_t\triangleq p-\hat{p}_t\in\mathbb{R}^{qn+mn}$ the parameter estimation error. From \eqref{xtrue}-\eqref{xhat}, the state and output estimation errors are, respectively, given by
\begin{align}
   & \tilde{x}_t=M_t\tilde{p}_t+F^t\tilde{x}_0+{\textstyle\sum_{k=0}^{t-1}F^{t-1-k}d_k},\label{xtilde}\\
& \tilde{y}_t=\phi_t^{\intercal}\tilde{p}_t + CF^t\tilde{x}_0+\eta_t.\label{yhattilde}
\end{align}

In the absence of disturbance $d_t$, $\tilde{x}_t$ would eventually converge to zero if $\tilde{p}_t$ converges to zero, since the second term in \eqref{xtilde} is exponentially decaying. In the presence of $d_t$, it can be shown that $\tilde{x}_t$ and $\tilde{p}_t$ are bounded for all time, using a suitably designed adaptation law for $\hat{p}_t$. {\color{black}{We use a gradient descent law designed to minimize $\tilde{y}_t$ to obtain the estimate $\hat{p}_t$. However, due to the disturbances, the obtained $\hat{p}_t$ may lie outside the set $\Pi\triangleq \{ [(\hat{a}-f)^{\intercal}\;\;\hat{b}^{\intercal}]^\intercal \;|\; [ \hat{a}^{\intercal}\;\;\hat{b}^{\intercal}]^{\intercal}\in\Psi,$ $\hat{a}\in\mathbb{R}^{qn},\hat{b}\in\mathbb{R}^{mn} \}$
implying $\hat{\psi}_t \notin \Psi,$ where $\hat{\psi}_t$ is the estimate of the true parameter $\psi$, whereas by Assumption \ref{ABa}, we know that $\psi\in\Psi$ \footnote{{\color{black}{By definition, the mapping between elements of $\Psi$ and $\Pi$ is a bijection.}}}. We, therefore, modify the update law using a projection operation to guarantee robustness to external disturbances.}} The projection-modified normalized gradient descent parameter estimation law for the adaptive observer is given by
\begin{align}
 & \bar{p}_{t}\triangleq \hat{p}_{t-1}+{\kappa \phi_t{(I_q+\phi_t^{\intercal} \phi_t)}^{-1} (y_t-CF^t\hat{x}_0-\phi_t^{\intercal} \hat{p}_{t-1})},\nonumber\\
 &   \hat{p}_{t}=\begin{cases}
 \bar{p}_{t},\;&\text{if $\bar{p}_{t}\in\Pi$} \\
    \underset{\xi\in \Pi}{\mathrm{arg\;min}} ||\bar{p}_t-\xi||,\;&\text{otherwise}
    \end{cases}\;\;\;\forall t\in\mathbb{I}_1^\infty, \label{gd1}
\end{align}
where $0<\kappa<2$ \cite[Sec.~4.11.3]{ioannou2006adaptive}, and $\hat{p}_0$ is chosen from $\Pi$. The value of $\hat{x}_t$ is simultaneously obtained using $\hat{p}_t$ in \eqref{xhat}. The estimates of $A$ and $B$ used for prediction in the subsequently formulated COCP are given by $\widehat{A}_t\triangleq \mathcal{M}_A (\hat{a}_t,n ) \text{ and }\widehat{B}_t\triangleq \mathcal{M}_B (\hat{b}_t,n ),$
where $\hat{a}_t$ and $\hat{b}_t$ are obtained from $\hat{p}_t$.

\subsection{Convergence and Boundedness}\label{secOd}
  Since $d_t\in\mathbb{D}$ and $F$ is Schur stable, we can define a bounded set $\Omega^d_\infty \ni\lim_{t\rightarrow\infty}\sum_{i=0}^{t-1}F^{t-1-i}d_i$, where $\Omega^d_\infty \subset\mathbb{R}^n$. The set $\Omega^d_\infty$ is the minimal robust positively invariant (RPI) set satisfying $F\Omega^d_\infty\oplus \mathbb{D}\subseteq \Omega^d_\infty$. For tractable computation, instead of $\Omega^d_\infty$, its outer RPI approximation $\Omega^d$ \cite{rakovic2005invariant} is used to obtain the set $\Omega^\eta\triangleq C\Omega^d\subset\mathbb{R}^q$ that contains $\eta_t$ $\forall t\in\mathbb{I}_0^\infty$. A finite uniform upper bound for $||\eta_t||$ is given by $\Delta_1\triangleq \max_{\eta\in\Omega^\eta} ||\eta||.$
For computing an upper bound of the $||CF^t\tilde{x}_0||$ term in \eqref{yhattilde}, the following standard assumption concerning the uncertainty in the value of $x_0$ is considered \cite{mayne2006robust,mayne2009robust,kogel2017robust,hu2019output,de2022robust,brunner2018enhancing}.
\begin{assu}\label{Awo1}
The initial state estimation error $\tilde{x}_0$ belongs to a known convex polytope $\widetilde{\mathbb{X}}_0\subset \mathbb{X}$, $\widetilde{\mathbb{X}}_0$ contains the origin, and, {\color{black}{$F\widetilde{\mathbb{X}}_0\oplus \left(\Omega^{yu}_1 \oplus\mathbb{D} \right)\subseteq \widetilde{\mathbb{X}}_0$ (RPI condition)}}, where
\begin{align} 
{\color{black}{\Omega^{yu}_1 \triangleq   \{\begin{bmatrix}
        Y&U
\end{bmatrix} \tilde{p}\; {|} \; Y\in\mathbf{Y},\;U\in\mathbf{U},\;\tilde{p}\in\widetilde{\Psi} \}}}, \label{Omega_yu_1}  
\end{align}
$\mathbf{Y}\triangleq \left\{ I_n\otimes y^{\intercal}\;|\; y\in C\mathbb{X} \right\},\;\mathbf{U}\triangleq \left\{I_n\otimes u^{\intercal}\;|\;u\in\mathbb{U} \right\}$ and  $\widetilde{\Psi}\triangleq  \text{\textbf{co}}\left(\left\{ \psi^{[i]}-\psi^{[j]}\;\Big{|}\;i\neq j\;\; \forall i,j\in\mathbb{I}_1^L \right\} \right) $.\footnote{{\color{black}{The RPI assumption on $\widetilde{\mathbb{X}}_0$ is also later exploited in constraint tightening in Sec. \ref{Seciv}.}}} 
\end{assu}

 The RPI condition on $\widetilde{\mathbb X}_0$ yields $F^{i+1}\widetilde{\mathbb{X}}_0\subseteq F^i\widetilde{\mathbb{X}}_0$ $\forall i\in\mathbb{I}_0^\infty$. This leads to a finite uniform upper bound for the term $||CF^t\tilde{x}_0||$, given by $\Delta_2\triangleq \max_{t\in\mathbb{I}_0^\infty,\;w\in\widetilde{\mathbb{X}}_0} ||CF^t w||=\max_{w\in\widetilde{\mathbb{X}}_0} ||Cw||.$
The finite bounds of $\Delta_1$ and $\Delta_2$ imply that the perturbation-like terms in \eqref{yhattilde} are bounded.

\begin{lemma}\label{L01}
{\color{black}{Suppose the input $u_t$ to the plant \eqref{osys1} and the observer \eqref{adob0} is stabilizing, and the input and output are uniformly bounded, i.e., $||u_t||\leq \bar{u}$ and $||y_t||\leq \bar{y}$ $\forall t\in\mathbb{I}_0^\infty$, respectively, where $\bar u$ and $\bar y$ are finite,}} then, 
the parameter update law \eqref{gd1} ensures
\begin{itemize}
    \item $e_t$, $\Gamma_te_t$, $\tilde{p}_t\in\mathcal{L}_\infty,$ where $e_t\triangleq \Gamma_t^{-2}(y_t-CF^t\hat{x}_0-\phi_t^{\intercal}\hat{p}_{t-1})$ and $\Gamma_t^2\triangleq I_q+\phi_t^{\intercal}\phi_t$,
    \item  $e_t$, $\Gamma_te_t$, $||\hat{p}_t-\hat{p}_{t-1}||\in\mathcal{S}(\Delta_0^2)$, where $\Delta_0$ is the upper bound of $||\Gamma_t^{-1}||||\eta_t+CF^t\tilde{x}_0||$ over all $t\in\mathbb{I}_0^\infty$ \footnote{Any sequence vector $z_t$ is said to belong to $\mathcal{S}(\Delta^2_0)$ if $\sum_{i=t}^{t+k}z^{\intercal}_iz_i\leq c_0\Delta^2_0k+c_1$ $\forall \;t\in\mathbb{I}_1^\infty$, a given constant $\Delta_0^2$, and some $k\in\mathbb{I}_1^\infty$, where $c_0,\;c_1\geq0$ \cite[Theorem~4.11.2, footnote~6]{ioannou2006adaptive}. Here, $\Delta_0\leq \Delta_1+\Delta_2$ (by definitions of $\Delta_1$, $\Delta_2$ and $\Gamma_t$).}, and
    \item the state estimation error $\tilde{x}_t\in\mathcal{L}_\infty$.
\end{itemize}  
\end{lemma}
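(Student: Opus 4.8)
The plan is to invoke the standard analysis of projection-modified normalized gradient descent from \cite[Sec.~4.11.3, Theorem~4.11.2]{ioannou2006adaptive}, adapted to the present regression model \eqref{ytrue}. First I would establish the key error identity: substituting \eqref{ytrue} into the definition of $e_t$ and using $\tilde p_{t-1}=p-\hat p_{t-1}$ gives $e_t=\Gamma_t^{-2}(\phi_t^\intercal\tilde p_{t-1}+\eta_t+CF^t\tilde x_0)$, so the ``measurement'' driving the update decomposes into a regressor term and a bounded perturbation term $\varepsilon_t\triangleq\eta_t+CF^t\tilde x_0$ with $\|\Gamma_t^{-1}\|\,\|\varepsilon_t\|\le\Delta_0\le\Delta_1+\Delta_2$ by the definitions of $\Delta_1,\Delta_2$ established just before the lemma (using $\eta_t\in\Omega^\eta$, $\tilde x_0\in\widetilde{\mathbb X}_0$, the RPI property, and $\|\Gamma_t^{-1}\|\le1$). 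Boundedness of $\phi_t$ follows from boundedness of $M_t$, which in turn follows from $F$ Schur stable together with the hypothesis that $u_t,y_t$ are uniformly bounded (so the forcing term $[Y_t\;\;U_t]$ in \eqref{FilterM} is bounded); hence $\Gamma_t,\Gamma_t^{-1}$ and the normalizing factor are all bounded, and $\Delta_0$ is finite.

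Next I would run the Lyapunov argument on $V_t\triangleq\tilde p_t^\intercal\tilde p_t$. For the unprojected update one computes, in the usual way for normalized gradient laws, $V_t-V_{t-1}\le-\dfrac{\kappa(2-\kappa)}{2}\,\dfrac{(\phi_t^\intercal\tilde p_{t-1})^\intercal(\phi_t^\intercal\tilde p_{t-1})}{I_q+\phi_t^\intercal\phi_t}+(\text{const})\,\|\Gamma_t^{-1}\varepsilon_t\|^2$, and the projection step only helps: since $p\in\Pi$ (equivalently $\psi\in\Psi$ by Assumption \ref{ABa} and the footnoted bijection) and $\Pi$ is closed and convex, the projection is non-expansive onto a set containing $p$, so $\|p-\hat p_t\|\le\|p-\bar p_t\|$, i.e.\ replacing $\bar p_t$ by $\hat p_t$ does not increase $V_t$. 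Summing the inequality telescopically from $1$ to $t$ gives $V_t+\frac{\kappa(2-\kappa)}{2}\sum_{i=1}^{t}\|\Gamma_i e_i\|^2\,(\text{up to normalization})\le V_0+c\sum_{i=1}^t\|\Gamma_i^{-1}\varepsilon_i\|^2\le V_0+c\,\Delta_0^2 t$. This single inequality yields all three bullets: (i) $V_t$ bounded gives $\tilde p_t\in\mathcal L_\infty$, and then $e_t=\Gamma_t^{-2}\phi_t^\intercal\tilde p_{t-1}+\Gamma_t^{-2}\varepsilon_t$ and $\Gamma_t e_t$ are bounded because $\phi_t,\Gamma_t^{-1},\varepsilon_t$ are all bounded; (ii) the telescoped sum is exactly the $\mathcal S(\Delta_0^2)$ estimate $\sum_{i=t}^{t+k}(\cdot)\le c_0\Delta_0^2 k+c_1$ for $e_t$, $\Gamma_t e_t$, and — since $\hat p_t-\hat p_{t-1}$ equals (after projection, again non-expansively) a bounded multiple of $\phi_t\Gamma_t^{-1}(\Gamma_t e_t)$ — also for $\|\hat p_t-\hat p_{t-1}\|$.

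Finally, for the third bullet I would use \eqref{xtilde}: $\tilde x_t=M_t\tilde p_t+F^t\tilde x_0+\sum_{k=0}^{t-1}F^{t-1-k}d_k$. The first term is bounded since $M_t$ is bounded ($F$ Schur, $u_t,y_t$ bounded) and $\tilde p_t\in\mathcal L_\infty$ from bullet one; the second decays to zero by Schur stability of $F$ with $\tilde x_0\in\widetilde{\mathbb X}_0$ compact; the third lies in the RPI set $\Omega^d$ (hence bounded) since $d_k\in\mathbb D$. Therefore $\tilde x_t\in\mathcal L_\infty$. The main obstacle I anticipate is purely bookkeeping: correctly propagating the normalization $(I_q+\phi_t^\intercal\phi_t)^{-1}$ through the Lyapunov difference and verifying that the projection modification preserves every inequality (the non-expansiveness argument must be stated carefully because the update is written in terms of $\hat p_{t-1}$ and $\bar p_t$, not a single-step gradient on $\hat p_t$); the rest follows the template in \cite{ioannou2006adaptive} essentially verbatim once the perturbation term is identified and bounded by $\Delta_0$.
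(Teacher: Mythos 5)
Your proposal is correct and follows essentially the same route as the paper's Appendix I: the same Lyapunov function $V_t=\tilde p_t^{\intercal}\tilde p_t$ (up to a factor of $1/2$), the same decomposition of the output error into a regressor term plus the perturbation $\eta_t+CF^t\tilde x_0$ bounded via $\Delta_0\le\Delta_1+\Delta_2$, and the same telescoping to obtain the $\mathcal S(\Delta_0^2)$ properties, with the final bullet read off from \eqref{xtilde}. The only (equivalent) cosmetic difference is that you dispose of the projection step by non-expansiveness of the projection onto the convex set $\Pi\ni p$, whereas the paper expands the cross terms in $g_t=\bar p_t-\bar\xi_t$ and invokes the obtuse-angle property directly.
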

\begin{proof}
    A sketch of the proof is given in Appendix \hyperlink{App1}{I}.
\end{proof}
{\color{black}{Beginning with an initially feasible COCP (formulated in Sec. \ref{Sec5}), the repeated sequential operation of the COCP followed by the observer guarantees a stabilizing $u_t$, and uniform bounds $\bar{u}\leq \max_{u\in\mathbb{U}}||u||$ on $u_t$ and $\bar y \leq \max_{x\in\mathbb{X}}||Cx||$ on $y_t$, leading to $M_t,\;\phi_t,\;\tilde{x}_t\in\mathcal{L}_\infty$ (as detailed in Remark \ref{remnew}).}} 
Additional guarantees on convergence of $\tilde{x}_t$ are mentioned in Lemma \ref{Lx0}.

\section{Components Required to Design the COCP}\label{Seciv}
{\color{black}{To reformulate the COCP in terms of state and parameter estimates, it is essential to compute suitable constraints for the state estimate trajectory. This involves accounting for the errors that arise when making predictions using the observer dynamics. Additionally, it is important to design a tube structure and a terminal set that allow robust handling of the errors in prediction while providing stability guarantees.}}
\subsection{Constraint for the State Estimate} Similar to \cite{anch}, we begin by rewriting the expression of $\tilde{x}_t$ in  \eqref{xtilde} using the solution of \eqref{FilterM}:
\begin{align}\label{xtil2}
\tilde{x}_{t}=F^t\tilde{x}_0+\textstyle\sum_{k=0}^{t-1}F^{t-1-k}([Y_k\;\;\;U_k]\tilde{p}_t+d_k).
\end{align}
In \cite{anch}, a single invariant set containing all possible values of $\tilde{x}_t$ $\forall t\in\mathbb{I}_0^\infty$ is used to obtain a single constraint set for $\hat{x}_t$. The resulting conservatism in the initial feasible region of the COCP for \cite{anch} is reduced by exploiting  
 \eqref{xtil2}. To this end, we compute time-dependent sets $\widetilde{\mathbb{X}}_t\ni\tilde{x}_t$ for each $t\in\mathbb{I}_0^\infty$ {\color{black}{\cite{mayne2009robust}}} with the help of the recursive relation 
\begin{align}
{\color{black}{\widetilde{\mathbb{X}}_{t+1}\triangleq F \widetilde{\mathbb{X}}_t\oplus (\Omega_1^{yu}\oplus\mathbb{D}) \;\;\;\forall t\in\mathbb{I}_0^\infty.}} \label{Xtildedyn}\end{align}
%
Constraint sets for the state estimates $\hat{x}_t$ are defined as
\begin{align}
    \hat{x}_t= x_t-\tilde{x}_t\in\widehat{\mathbb{X}}_t \triangleq \mathbb{X}\ominus \widetilde{\mathbb{X}}_t \;\; \;\;\forall t\in\mathbb{I}_0^\infty.\label{tight2}
\end{align}
Following Assumption \ref{Awo1} and \eqref{Xtildedyn}, we obtain $\widetilde{\mathbb{X}}_{t+1}\subseteq \widetilde{\mathbb{X}}_t$ $\forall t\in\mathbb{I}_0^\infty$. Consequently, the size of $\widehat{\mathbb{X}}_t$ increases with $t$, i.e., $\widehat{\mathbb{X}}_t\subseteq \widehat{\mathbb{X}}_{t+1}$ $\forall t\in\mathbb{I}_0^\infty$, resulting in a larger region for the state estimate trajectory to evolve, thereby, increasing the initial feasible region of the subsequently reformulated COCP.

\subsection{State Prediction in MPC COCP }
The predictions of the state estimates are obtained using the adaptive observer dynamics rewritten below using \eqref{adob0}:
\begin{align}
    \hat{x}_{t+i+1}&=F\hat{x}_{t+i}+\begin{bmatrix}
         Y_{t+i} &U_{t+i}
     \end{bmatrix}\hat{p}_{t+i} \nonumber\\
     &+\textstyle\sum_{k=0}^{t+i}F^{t+i-k}\begin{bmatrix}
         Y_k& U_k
     \end{bmatrix}(\hat{p}_{t+i+1}-\hat{p}_{t+i})  \label{AOsyspr}\\
     &\;\;\;\;\;\;\;\;\;\;\;\;\forall (t,\;i)\in\mathbb{I}_0^\infty\times \mathbb{I}_0^{N-1},\nonumber
\end{align}
where $t$ is the current time, and $i\in\mathbb{I}_0^{N-1}$ denotes the number of steps ahead of $t$ in the future. Since the terms $Y_{t+i}$ and $\hat{p}_{t+i}$ $\forall i\in\mathbb{I}_1^{N-1}$ in \eqref{AOsyspr} are unavailable at current time $t$, the dynamics in \eqref{AOsyspr} can not be utilized directly for prediction. Motivated by \cite{langson2004robust,rakovic2012homothetic,lorenzen2019robust} to use robust tubes, we divide the dynamics into two parts - one comprising of known quantities, and the other containing the uncertainties, as
\begin{align}
    &\hat{x}_{t+i+1}=F\hat{x}_{t+i}+\begin{bmatrix}
         \widehat{Y}_{t+i} &U_{t+i}
     \end{bmatrix}\hat{p}_{t} +{\color{black}{\varepsilon_{t,i}}}\nonumber\\
     &=\widehat{A}_t \hat{x}_{t+i}+\widehat{B}_tu_{t+i}+\varepsilon_{t,i}\;\;\;\forall (t,\;i)\in\mathbb{I}_0^\infty\times \mathbb{I}_0^{N-1},\label{preddyn}
\end{align}
where $\widehat{Y}_{t+i}\triangleq I_n\otimes (C\hat{x}_{t+i})^\intercal$ is the estimate of $Y_{t+i}$, and $\varepsilon_{t,i}$ $\in\mathbb{R}^n$ contains all the unknown quantities and is termed as the prediction uncertainty. Comparing \eqref{AOsyspr} with \eqref{preddyn}, we can write $\varepsilon_{t,i}\triangleq \begin{bmatrix}
    Y_{t+i}&U_{t+i}
\end{bmatrix}\hat{p}_{t+i}-\begin{bmatrix}
    \widehat{Y}_{t+i}&U_{t+i}
\end{bmatrix}\hat{p}_t +\sum_{k=0}^{t+i}F^{t+i-k}\begin{bmatrix}
         Y_k& U_k
     \end{bmatrix}(\hat{p}_{t+i+1}-\hat{p}_{t+i})$, which after simple manipulation yields
\begin{align}
\varepsilon_{t,i}     
     &= (\widehat{A}_{t+i}-F )\tilde{x}_{t+i}+\textstyle\sum_{k=t}^{t+i-1}\begin{bmatrix}
         \widehat{Y}_{t+i}&U_{t+i}
     \end{bmatrix}(\hat{p}_{k+1}\nonumber\\
     -\hat{p}_{k}&)+\textstyle\sum_{k=0}^{t+i}F^{t+i-k}\begin{bmatrix}
         Y_k& U_k
     \end{bmatrix}(\hat{p}_{t+i+1}-\hat{p}_{t+i}),\label{fstterm}
\end{align}
where $\widetilde{Y}_{t+i}\triangleq Y_{t+i}-\widehat{Y}_{t+i}=I_n \otimes \tilde{y}_{t+i}^\intercal$. For the first term in \eqref{fstterm}, we define the following set:
\begin{align}
   & (\widehat{A}_t-F )\tilde{x}_{t}\in\;\bar{\Omega}_{t}\triangleq  {\{} ( \widehat{A}-F )\tilde{x} \;{|}\; \widehat{A}=\mathcal{M}_A(\hat{a},n), \text{ where} \nonumber\\
    & \hat{a}\in\mathbb{R}^{qn},\; \begin{bmatrix}
        \hat{a}^{\intercal}&\hat{b}^{\intercal}
    \end{bmatrix}^{\intercal}\in\Psi ,\; \tilde{x}\in\widetilde{\mathbb{X}}_{t} {\}}\;\;\forall t\in\mathbb{I}_0^\infty .\label{Obar}
\end{align}
For the second term in \eqref{fstterm} which contains $i$ separate terms in the summation, we define sets $\widehat{\Omega}_{t}$ for the individual terms similar to that of $\Omega^{yu}_1$ in \eqref{Omega_yu_1} as
\begin{align}\label{Omegahat}
    \widehat{\Omega}_{t}\triangleq \{ [
        \widehat{Y}\;\;\;U
    ] \tilde{p}\;{|} \; \widehat{Y}\in\widehat{\mathbf{Y}}_t,\;U\in{\mathbf{U}},\;\tilde{p}\in\widetilde{\Psi} \},
\end{align}
where $\widehat{\mathbf{Y}}_t \triangleq \left\{ I_n \otimes \hat{y}^{\intercal} \big{|}\; \hat{y} =C \hat{x}\; \;\;\forall \hat{x} \in \widehat{\mathbb{X}}_t \right\}$. 
And, the third term in \eqref{fstterm} belongs to the set $\Omega^{yu}_{t+i+1}$, where
\begin{align}
    {\color{black}{\Omega^{yu}_{t+1}\triangleq F\Omega_t^{yu}\oplus \Omega_1^{yu} \;\;\forall t\in\mathbb{I}_1^\infty}}   .\label{Omega_yu_t}
\end{align}
Using \eqref{Obar}-\eqref{Omega_yu_t}, we can write that $\varepsilon_{t,i}$ belongs to
\begin{align}\label{epsilset}
{\color{black}{\mathcal{E}_{t,i}}}\triangleq \bar{\Omega}_{t+i}\oplus i\widehat{\Omega}_{t+i}\oplus{\color{black}{\Omega^{yu}_{t+i+1}}}\;\;\;\forall (t,i)\in\mathbb{I}_0^\infty\times\mathbb{I}_0^{N-1}.
\end{align}

Also, we compute the sets $\widehat \Omega_\infty$ and $\Omega^{yu}$ offline, which are used for designing the tube structure and the terminal set as shown next. {\color{black}{For $\widehat \Omega_\infty$, we require \eqref{Omegahat} and the set $\widehat{\mathbb{X}}_\infty\triangleq \mathbb X \ominus (\Omega^{yu}\oplus \Omega^d)$}}, where $\Omega^{yu}$ is the outer RPI approximation of the minimal RPI set $\Omega^{yu}_\infty$ that satisfies $F\Omega^{yu}_\infty\oplus\Omega^{yu}_1\subseteq\Omega^{yu}_\infty$ \cite{rakovic2005invariant}.   



\subsection{Tubes}
Solving the COCP results in tubes for state estimate and control input. At each time $t$, we obtain a set of $N+1$ tube sections for the observer state 
\begin{align}
    \mathcal{T}^{\hat{x}}_t\triangleq \left\{ \mathbb{T}^{\hat{x}}_{0|t},\; \mathbb{T}^{\hat{x}}_{1|t},\; ...,\;\mathbb{T}^{\hat{x}}_{N|t} \right\},
\end{align}
and a set of $N$ tube sections for control input
\begin{align}
     \mathcal{T}^{u}_t\triangleq \left\{ \mathbb{T}^{u}_{0|t},\; \mathbb{T}^{u}_{1|t},\; ...,\;\mathbb{T}^{u}_{N-1|t} \right\}.
\end{align}
The tube sections in $\mathcal{T}_t^{\hat{x}}$ are designed to be scaled and shifted versions of a user-defined convex polytope $\mathbb{H}\triangleq \text{\textbf{co}} \left ( \left\{h^{[1]},\;h^{[2]},...,h^{[\delta]} \right\} \right)$ with  $\delta$ number of known vertices and containing the origin. For achieving desirable robustness and stability guarantees, the following standard assumption \cite{rakovic2012homothetic} is made on the structure of $\mathbb{H}$ using a robust upper bound of the prediction uncertainty. 
\begin{assu}\label{BasicP}
   The set $\mathbb{H}$ satisfies {\color{black}{$\left( \widehat{A}\oplus\widehat{B}K \right)\mathbb{H}\oplus \Omega_\varepsilon\subseteq \mathbb{H},$}}
   where $(\widehat{A},\widehat{B})\in\mathbb{M}_\Psi$ and {{\color{black}$\Omega_\varepsilon\triangleq \bar{\Omega}_0\oplus (N-1)\widehat{\Omega}_\infty\oplus\Omega^{yu}.$ }}
\end{assu}

Assumption \ref{BasicP} implies that 
$\mathbb{H}$ is RPI to the parametric uncertainties in $\widehat{A}+\widehat{B}K$ and the prediction uncertainties in $\varepsilon$; it can be constructed using the algorithms given in \cite{dey2024computation}, \cite{kouramas}. Since $\mathbb{H}$ is computed offline, $\Omega_\varepsilon$ is defined such that $\Omega_\varepsilon\supseteq \mathcal{E}_{t,i}$ $\forall(t,i)\in\mathbb{I}_0^\infty\times\mathbb{I}_0^{N-1}$.

The tube sections of $\mathcal{T}_t^{\hat{x}}$ are given by
\begin{align}
\mathbb{T}^{\hat{x}}_{i|t}\triangleq  \alpha_{i|t}\oplus \beta_{i|t}\mathbb{H}\subseteq \widehat{\mathbb{X}}_{t+i}\;\;\;\forall i\in\mathbb{I}_0^N ,\label{Tube1}
\end{align}
where $\alpha_{i|t}\in\mathbb{R}^n$ and $\beta_{i|t}\geq 0$ are centers and scaling factors of $\mathbb{T}^{\hat{x}}_{i|t}$, respectively. Using the vertices of $\mathbb{H}$ and \eqref{Tube1}, the tube sections of $\mathcal{T}_t^{\hat{x}}$ are represented as
\begin{align}
  &  \mathbb{T}^{\hat{x}}_{i|t}=\text{\textbf{co}} \left( \left\{ s^{[1]}_{i|t},\;s^{[2]}_{i|t},\;...,s^{[\delta]}_{i|t} \right\} \right), \text{ where} \label{Tube2}\\
  &  s^{[j]}_{i|t}\triangleq \alpha_{i|t}+\beta_{i|t}h^{[j]}\;\;\;\forall (i,j,t)\in\mathbb{I}_0^N\times\mathbb{I}_1^{\delta}\times\mathbb{I}_0^\infty. \label{salphabeta}
\end{align}
Corresponding to \eqref{Tube2}, the control tube sections are written as
\begin{align}
    \mathbb{T}^{u}_{i|t}\triangleq \text{\textbf{co}} \left( \left\{ v^{[1]}_{i|t},\;v^{[2]}_{i|t},\;...,v^{[\delta]}_{i|t} \right\} \right) \subseteq\mathbb{U}&\nonumber\\
    \forall (i,j,t)\in\mathbb{I}_0^{N-1}&\times\mathbb{I}_1^{\delta}\times\mathbb{I}_0^\infty. \label{Tube3}
\end{align}
{{Let the state $\hat{x}_{i|t}$, belonging to the tube section $\mathbb{T}^{\hat{x}}_{i|t}$, be expressed using the convex combination of its vertices as
\begin{align}\label{convX}
    \hat{x}_{i|t}=\textstyle\sum_{j=1}^\delta \tau^{[j]}_{(i,\;t)}s_{i|t}^{[j]},
\end{align}
where $\sum_{j=1}^\delta \tau_{(i,\;t)}^{[j]}=1$ and each $\tau^{[j]}_{(i,\;t)}\in [0,\;1]$. Then, the corresponding control input $u_{i|t}$, to be applied as a function of $\hat{x}_{i|t}$, is given by
\begin{align}\label{convU}
    u_{i|t}(\hat{x}_{i|t})=\textstyle\sum_{j=1}^\delta \tau^{[j]}_{(i,\;t)}v_{i|t}^{[j]}.
\end{align}}}
\subsection{Terminal Set}
The following standard assumption \cite{rakovic2012homothetic},\cite{lorenzen2019robust} and Assumption \ref{ABa} are used to design a suitable terminal set that helps guarantee stability and convergence properties and recursive feasibility.
\begin{assu}\label{AsmTS}
    There exists a non-empty terminal set $\widehat{\mathbb{X}}_{\text{TS}}\subseteq\widehat{\mathbb{X}}_N$ such that $\alpha  \oplus \beta \mathbb{H} \subseteq \widehat{\mathbb{X}}_{\text{TS}}\Rightarrow  \left( \widehat{A}+\widehat{B}K \right)\alpha \oplus (\zeta_1\beta + \zeta_2 ) \mathbb{H}\subseteq \widehat{\mathbb{X}}_{\text{TS}}$, $ K\widehat{\mathbb{X}}_{\text{TS}}\subseteq\mathbb{U}$, where $\alpha\in\widehat{\mathbb{X}}_{\text{TS}},\;\beta\geq 0$, $(\widehat{A},\widehat{B})\in\mathbb{M}_\Psi$,
    \begin{align}
       & \zeta_1\triangleq \min_\zeta \left\{\zeta \;\big{|}\; \left( \widehat{A}+\widehat{B}K \right)\mathbb{H}\subseteq \zeta\mathbb{H} , \; \zeta\in [0,1) \right\} , 
       \label{zeta1}\\
       & \zeta_2\triangleq \min_\zeta \left\{ \zeta \; \big{|} \; \bar{\Omega}_\varepsilon \subseteq \zeta \mathbb{H},\;\zeta\in[0,1] \right\},
       \label{zeta2}\\
& {\color{black}{\bar{\Omega}_\varepsilon\triangleq  \bar{\Omega}_{N-1}\oplus  (N-1)\widehat{\Omega}_\infty\oplus \Omega^{yu} }}.
\label{OmegaTS}
    \end{align}
\end{assu}

The set $\widehat{\mathbb X}_{\text{TS}}$ can be constructed using the algorithm in \cite{dey2024computation}. By definition, $\bar{\Omega}_\varepsilon$ contains all possible values of $\varepsilon_{t,N-1}\in \mathcal{E}_{t,N-1}$ $\forall t\in\mathbb{I}_0^\infty$.  {\color{black}{The construction of $\widehat{\mathbb X}_\text{TS}$ is simplified by considering the sets $\zeta_1\mathbb{H}\supseteq (\widehat{A}+\widehat{B}K)\mathbb{H}$ and $\zeta_2\mathbb{H}\supseteq \bar{\Omega}_\varepsilon$. The constants $\zeta_1$ and $\zeta_2$ satisfy their respective ranges due to Schur stable property of the matrices $( \widehat{A}+\widehat{B}K )$ and by design of $\mathbb{H}$ since $\bar{\Omega}_\varepsilon\subseteq\Omega_\varepsilon$, respectively.}}

\section{
Reformulated COCP for RAOFMPC}\label{Sec5}
The COCP is reformulated as follows with decision variable $\theta_t\triangleq \left\{ \{\alpha_{i|t}\}_{i=0:N},\;\{\beta_{i|t}\}_{i=0:N},\; \left\{v^{[j]}_{i|t} \right\}_{i=0:N-1,\;j=1:\delta} \right\}$.
\begin{align}
&\text{COCP}:\;\;\min_{\theta_t} J(\hat{x}_t,\theta_t),\;\;\text{where }\nonumber\\
& J(\hat{x}_t,\theta_t)\triangleq \sum_{i=0}^{N-1} \sum_{j=1}^\delta\Big{\{}\Big{|}\Big{|}s^{[j]}_{i|t}\Big{|}\Big{|}^2_Q+\Big{|}\Big{|}v^{[j]}_{i|t}\Big{|}\Big{|}^2_R  \Big{\}}  +\sum_{j=1}^{\delta}\Big{|}\Big{|}s^{[j]}_{N|t}\Big{|}\Big{|}^2_P \label{MPC2}\\
&\text{subject to {\color{black}{\eqref{Tube1}-\eqref{Tube3}}},} \nonumber\\
&\alpha_{0|t}=\hat{x}_t,\;\beta_{0|t}=0\text{ and }\beta_{i|t}\geq 0\;\;\;\forall i\in\mathbb{I}_{1}^{N},  \tag{\ref{MPC2}a}\label{cons1m}\\
   & \alpha_{N|t}=  ( \widehat{A}_t+\widehat{B}_tK ) \alpha_{N|t-1},\;\; \;t\in\mathbb{I}_1^\infty,  \tag{\ref{MPC2}b}\label{cons7m} \\
   & \beta_{N|t}\leq \zeta_1\beta_{N|t-1}+\zeta_2, \;\;\;t\in\mathbb{I}_1^\infty , \tag{\ref{MPC2}c}\label{cons8m} \\ 
  &\mathbb{T}^{\hat{x}}_{i|t}\subseteq \widehat{\mathbb{X}}_{t+i},\text{ }\mathbb{T}^u_{i|t}\subseteq\mathbb{U}\;\;\;\forall i\in\mathbb{I}_{0}^{N-1} , \tag{\ref{MPC2}d}\label{cons2m}\\
   & \mathbb{T}^{\hat{x}}_{N|t}\subseteq \widehat{\mathbb{X}}_{\text{TS}}, \text{ and}  \tag{\ref{MPC2}e}\label{cons3m}\\
    &\widehat{A}_ts_{i|t}^{[j]}+\widehat{B}_tv_{i|t}^{[j]}\in \mathbb{T}^{\hat{x}}_{i+1|t}\ominus \mathcal{E}_{t,i} \; \forall (i,j)\in\mathbb{I}_{0}^{N-1}\times\mathbb{I}_1^\delta.  \tag{\ref{MPC2}f}\label{cons6m}
\end{align} 

Two additional constraints \eqref{cons7m} and \eqref{cons8m} have been introduced in the COCP for guaranteeing desirable properties of recursive feasibility, stability and convergence. Provided Assumption \ref{AsmTS} holds, the constraints \eqref{cons7m} and \eqref{cons8m} will also be satisfied $\forall t\in\mathbb{I}_1^\infty$. The output of the optimizer is the optimal $\theta^*_t$ that is used to construct the optimal tubes for state estimate and control input at time $t$. Adding the sets $\widetilde{\mathbb{X}}_t$ to the tube for state estimate, obtained from \eqref{MPC2}, yields a robust outer tube for the true state trajectory. This results in a two-tube structure: the inner one containing the trajectory of $\hat{x}_t$ and the outer one containing the trajectory of $x_t$. Algorithm \ref{alg} provides the steps to implement the RAOFMPC framework.

\begin{algorithm}[h!]
\caption{RAOFMPC}\label{alg}
\begin{algorithmic}[1]
\REQUIRE $\mathbb{X}$, $\mathbb{U}$, $\mathbb{D}$, $\psi^{[i]}$ $\forall i\in\mathbb{I}_1^L$, $F$, $\hat{\psi}_0$, $\widetilde{\mathbb{X}}_0$, $\hat{x}_0$, $N$, $Q$, $R$, $\kappa$.
\ENSURE  $\theta_t^*$ $\;\forall t\in\mathbb{I}_0^\infty$.\\
\STATE \textbf{Offline Steps:} Compute {\color{black}{$\widetilde{\mathbb{X}}_i$ $\forall i\in\mathbb I_1^N$, ${\widehat{\mathbb{X}}_i}$ $\forall i\in\mathbb{I}_0^N$, $\Omega_i^{yu}$ $\forall i\in\mathbb{I}_1^{N}$, $\bar{\Omega}_{i}$, $\widehat{\Omega}_i$ and $\mathcal{E}_{0,i}$ $\forall i\in\mathbb{I}_0^{N-1}$}} using \eqref{Xtildedyn}, \eqref{tight2}, \eqref{Obar}-\eqref{epsilset}.
Also, compute 
$\mathbb{H}$
following Assumption \ref{BasicP},
$(P,K)$ following Assumption \ref{ABa}, and $\widehat{\mathbb{X}}_\text{TS}$, $\zeta_1$ and $\zeta_2$ following  Assumption \ref{AsmTS}. 
  
{{\textbf{Online Steps:}}}
\STATE Measure $y_0$ as the output of the plant \eqref{osys1} at $t=0$.
\FOR{$t\geq 0$} 
  \STATE   Run the COCP \eqref{MPC2} with $\widehat{A}_t$, $\widehat{B}_t$ and $\hat{x}_t$ to obtain $\theta_t^*$.
 \STATE   Apply $u_t=u_{0|t}^*={v^{[j]}_{0|t}}^*$ (choose any $j\in\mathbb{I}_1^\delta$) to the plant \eqref{osys1}, and measure $y_{t+1}$.
\STATE Apply $u_t$, $y_t$, $y_{t+1}$ to the observer to get $\widehat{A}_{t+1}$, $\widehat{B}_{t+1}$ and $\hat{x}_{t+1}$ using \eqref{FilterM}, \eqref{gd1} and \eqref{xhat}. 
   \STATE  Update $t \gets t+1$.
   \STATE {\color{black}{Compute $\widetilde{\mathbb{X}}_{t+N}$, $\widehat{\mathbb{X}}_{t+N}$, $\bar{\Omega}_{t+N-1}$, $\widehat{\Omega}_{t+N-1}$, $\Omega^{yu}_{t+N}$, $\mathcal{E}_{t,i}$ $\forall i\in\mathbb{I}_0^{N-1}$ using \eqref{Xtildedyn}, \eqref{tight2}, \eqref{Obar}-\eqref{epsilset}, respectively}}\footnotemark. 
\ENDFOR
\end{algorithmic}
\end{algorithm}\footnotetext{{\color{black}{$\widetilde{\mathbb{X}}_{t-1}$ and $\Omega^{yu}_{t}$ if $t\neq 1$, and, $\widehat{\mathbb{X}}_{t-1}$, $\bar{\Omega}_{t-1}$, $\widehat{\Omega}_{t-1}$, $\mathcal{E}_{t-1,i}$ $\forall i\in\mathbb{I}_0^{N-1}$ can be removed from memory at this step.}}}

\subsection{Properties of the COCP}
While $\tilde{x}_t$ is robustly handled using constraint tightening by $\widetilde{\mathbb{X}}_t$, the future values of $\varepsilon_{t,i}$ depend on the solution of the optimization problem \eqref{MPC2}. The following lemma shows that the predicted values of $\varepsilon_{t,i}\in\mathcal{E}_{t,i}$ $\forall (t,i)\in \mathbb{I}_0^\infty\times\mathbb{I}_0^{N-1}$.

\begin{lemma}\label{Lrf1}
If ${x}_t\in\mathbb{X}$, $\hat{x}_t\in\widehat{\mathbb{X}}_t$ and the COCP \eqref{MPC2} is feasible at time $t\in\mathbb{I}_0^\infty$ resulting in $\theta_t$, then $\varepsilon_{t,i}\in\mathcal{E}_{t,i}$ $\forall i\in\mathbb{I}_0^{N-1}$, and $\hat{x}_{t+i}\in\widehat{\mathbb{X}}_{t+i}$ $\forall i\in\mathbb{I}_1^N$ guaranteeing $x_{t+i}\in\mathbb{X}$ $\forall i\in\mathbb{I}_1^N$.
\end{lemma}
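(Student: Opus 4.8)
The plan is to establish the two claims in sequence: first that $\varepsilon_{t,i}\in\mathcal{E}_{t,i}$ for every $i\in\mathbb{I}_0^{N-1}$, and then, using this together with constraint \eqref{cons6m}, that the predicted observer states stay in the tightened sets, which finally gives true-state constraint satisfaction via \eqref{tight2}. For the first claim I would start from the decomposition \eqref{fstterm} of $\varepsilon_{t,i}$ into three groups of terms. The hypothesis $x_t\in\mathbb{X}$ and $\hat{x}_t\in\widehat{\mathbb{X}}_t$ gives $\tilde{x}_t=x_t-\hat{x}_t\in\mathbb{X}\ominus\widehat{\mathbb{X}}_t$; combined with the RPI recursion \eqref{Xtildedyn} for $\widetilde{\mathbb{X}}_t$ (which is driven by $\Omega_1^{yu}\oplus\mathbb{D}$, the set containing $[Y_k\ U_k]\tilde p+d_k$ by Assumption \ref{Awo1} and the bounds $y\in C\mathbb{X}$, $u\in\mathbb{U}$, $\tilde p\in\widetilde\Psi$), a forward induction on the index $t+i$ shows $\tilde{x}_{t+i}\in\widetilde{\mathbb{X}}_{t+i}$ for all $i$. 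Then the first term $(\widehat{A}_{t+i}-F)\tilde{x}_{t+i}$ lies in $\bar\Omega_{t+i}$ by the very definition \eqref{Obar}; each summand of the second (finite) sum lies in $\widehat\Omega_{t+i}$ by \eqref{Omegahat}, using $\hat{x}\in\widehat{\mathbb{X}}_{t+i}$ so that $\widehat Y\in\widehat{\mathbf{Y}}_{t+i}$, and $\hat p_{k+1}-\hat p_k\in\widetilde\Psi$ (difference of two points of $\Pi$, hence in the convex hull of vertex differences); and the third sum lies in $\Omega^{yu}_{t+i+1}$ by unrolling the recursion \eqref{Omega_yu_t}. Adding these three Minkowski contributions yields exactly $\mathcal{E}_{t,i}$ as defined in \eqref{epsilset}.

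For the second claim I would argue by induction on $i$. The base case $i=0$ is immediate: \eqref{cons1m} forces $\hat{x}_{0|t}=\alpha_{0|t}=\hat{x}_t\in\widehat{\mathbb{X}}_t$. For the inductive step, suppose $\hat{x}_{i|t}\in\mathbb{T}^{\hat{x}}_{i|t}$. Writing $\hat{x}_{i|t}$ as a convex combination $\sum_j\tau^{[j]}s^{[j]}_{i|t}$ of the tube vertices as in \eqref{convX} and using the matching control law \eqref{convU}, the actual one-step prediction is $\widehat{A}_t\hat{x}_{i|t}+\widehat{B}_tu_{i|t}+\varepsilon_{t,i}=\sum_j\tau^{[j]}\bigl(\widehat{A}_ts^{[j]}_{i|t}+\widehat{B}_tv^{[j]}_{i|t}\bigr)+\varepsilon_{t,i}$. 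By constraint \eqref{cons6m} each $\widehat{A}_ts^{[j]}_{i|t}+\widehat{B}_tv^{[j]}_{i|t}\in\mathbb{T}^{\hat{x}}_{i+1|t}\ominus\mathcal{E}_{t,i}$; convexity of $\mathbb{T}^{\hat{x}}_{i+1|t}\ominus\mathcal{E}_{t,i}$ keeps the convex combination in that set, and then adding $\varepsilon_{t,i}\in\mathcal{E}_{t,i}$ (from the first claim) and using the Pontryagin-difference identity $(\mathcal{C}\ominus\mathcal{D})\oplus\mathcal{D}\subseteq\mathcal{C}$ places $\hat{x}_{i+1|t}$ in $\mathbb{T}^{\hat{x}}_{i+1|t}$. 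Since \eqref{cons2m}/\eqref{cons3m} give $\mathbb{T}^{\hat{x}}_{i+1|t}\subseteq\widehat{\mathbb{X}}_{t+i+1}$ (the terminal set $\widehat{\mathbb{X}}_\text{TS}\subseteq\widehat{\mathbb{X}}_N\subseteq\widehat{\mathbb{X}}_{t+N}$ covering the $i+1=N$ case, using monotonicity $\widehat{\mathbb{X}}_N\subseteq\widehat{\mathbb{X}}_{t+N}$), we get $\hat{x}_{t+i}\in\widehat{\mathbb{X}}_{t+i}$ for all $i\in\mathbb{I}_1^N$. Finally $x_{t+i}=\hat{x}_{t+i}+\tilde{x}_{t+i}\in\widehat{\mathbb{X}}_{t+i}\oplus\widetilde{\mathbb{X}}_{t+i}\subseteq\bigl(\mathbb{X}\ominus\widetilde{\mathbb{X}}_{t+i}\bigr)\oplus\widetilde{\mathbb{X}}_{t+i}\subseteq\mathbb{X}$ by \eqref{tight2}.

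The main obstacle I anticipate is the bookkeeping in the first claim: one must be careful that the index shifts in \eqref{fstterm} line up with the definitions \eqref{Obar}, \eqref{Omegahat}, \eqref{Omega_yu_t}, in particular that the second summation genuinely contributes $i$ copies of $\widehat\Omega_{t+i}$ (not $\widehat\Omega_{t+k}$ for varying $k$) — this relies on $\widehat{\mathbf{Y}}_{t+i}$ being evaluated at the worst-case tube section and on the monotonicity $\widehat{\mathbb{X}}_{t+k}\subseteq\widehat{\mathbb{X}}_{t+i}$ for $k\le i$, so that the $i$ individual-step sets are all contained in $\widehat\Omega_{t+i}$. A secondary subtlety is justifying $\hat p_{k+1}-\hat p_k\in\widetilde\Psi$ for the predicted (not yet realized) parameter increments; since the COCP treats $\hat p_{t+i}$ as unknown but confined to $\Pi$, any admissible increment is a difference of two points of $\Pi$, hence lies in $\widetilde\Psi=\text{\textbf{co}}(\{\psi^{[i]}-\psi^{[j]}\})$, which is the conservative over-approximation used throughout. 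Once these containment chains are set up, the rest is routine Minkowski-sum and Pontryagin-difference manipulation.
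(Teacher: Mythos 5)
Your proposal is correct and follows essentially the same route as the paper: an induction along the prediction horizon that uses the set definitions \eqref{Obar}--\eqref{epsilset} to place $\varepsilon_{t,i}$ in $\mathcal{E}_{t,i}$, constraint \eqref{cons6m} together with the convex-combination control \eqref{convX}--\eqref{convU} to keep $\hat{x}_{t+i}$ in the tube and hence in $\widehat{\mathbb{X}}_{t+i}$, and \eqref{tight2} to recover $x_{t+i}\in\mathbb{X}$. The only adjustment needed is to run your two claims as a single interleaved induction on $i$ rather than sequentially, since placing the second group of terms in \eqref{fstterm} inside $\widehat{\Omega}_{t+i}$ already presupposes $\hat{x}_{t+i}\in\widehat{\mathbb{X}}_{t+i}$ (and $u_{t+i}\in\mathbb{U}$) from the state-containment claim at step $i$ — which is exactly how the paper's argument proceeds.
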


\begin{proof}
We first show that the statement is true at $t=0$. If ${x}_0\in\mathbb{X}$, $\hat{x}_0\in\widehat{\mathbb{X}}_0$, then $\tilde{x}_0\in\tilde{\mathbb{X}}_0$ (from \eqref{tight2}). And, $\varepsilon_{0,0}=( \widehat{A}_0-F  )\tilde{x}_0+\begin{bmatrix}
Y_0&U_0
\end{bmatrix}(\hat{p}_1-\hat{p}_0)\in\bar{\Omega}_0\oplus\Omega_1^{yu}=\mathcal{E}_{0,0}$. Given the COCP is feasible at $t=0$, we obtain $N$ number of inputs. Implementing $u_0=u_{0|0}=v^{[j]}_{0|0}$ for any $j\in\mathbb{I}_1^\delta$ to the plant \eqref{osys1} and observer \eqref{preddyn} results in $x_1$ and $\hat{x}_1$, respectively, where \eqref{cons2m} and \eqref{cons6m} guarantee that $\hat{x}_1\in\mathbb{T}^{\hat{x}}_{1|0}\subseteq \widehat{\mathbb{X}}_1$. Further, from \eqref{xtil2} and \eqref{Xtildedyn}, we have $\tilde{x}_1\in\tilde{\mathbb{X}}_1$. Therefore, $x_1=\hat{x}_1+\tilde{x}_1\in\mathbb{X}$ (from \eqref{tight2}). The input $u_{i|0}$ to be applied at each step to obtain $\hat{x}_{i+1}$ can be written using a convex combination (see \eqref{convX} and \eqref{convU}). Then, proceeding similarly as done for $\varepsilon_{0,0}$, $\hat{x}_1$ and $x_1$ using the definitions of the sets, it can be shown that $\varepsilon_{0,i}\in \mathcal{E}_{0,i} $ $\forall i\in\mathbb{I}_1^{N-1}$, and $\hat{x}_{i}\in\widehat{\mathbb{X}}_i$ guaranteeing $x_{i}\in\mathbb{X}$ $\forall i\in\mathbb{I}_2^N$. The remaining proof for any $t\in\mathbb{I}_1^\infty$ easily follows by repeating similar steps using the corresponding set definitions.
\end{proof}

\subsubsection{Recursive feasibility}
\begin{theo}\label{theo1}
If Assumptions \ref{ABa}-\ref{AsmTS} hold and the COCP \eqref{MPC2} is feasible at any time $t\in\mathbb{I}_0^\infty$, then it is also feasible at all time $t+i$ $\;\forall i \in \mathbb{I}_1^\infty$.
\end{theo}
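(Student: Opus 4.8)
The plan is to argue by induction on $t$: the base case is the hypothesis that \eqref{MPC2} is feasible at the given time, and from that feasibility together with $x_t\in\mathbb{X}$ one has $\hat x_t\in\widehat{\mathbb{X}}_t$ via \eqref{cons1m}--\eqref{cons2m}, so Lemma \ref{Lrf1} applies and yields $\hat x_{t+1}\in\mathbb{T}^{\hat x}_{1|t}\subseteq\widehat{\mathbb{X}}_{t+1}$ and $x_{t+1}\in\mathbb{X}$. For the inductive step I would exhibit an explicit feasible candidate $\theta_{t+1}$ built from the optimal $\theta_t^*$ by the usual shift-and-append recipe of homothetic-tube MPC. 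Concretely: collapse the new initial section, $\alpha_{0|t+1}=\hat x_{t+1}$, $\beta_{0|t+1}=0$ (as \eqref{cons1m} forces); shift the middle sections, $\alpha_{i|t+1}=\alpha_{i+1|t}^*$, $\beta_{i|t+1}=\beta_{i+1|t}^*$, $v^{[j]}_{i|t+1}=v^{[j]\,*}_{i+1|t}$ for $i\in\mathbb{I}_1^{N-2}$; and append a fresh terminal section driven by the terminal controller, $\alpha_{N|t+1}=(\widehat A_{t+1}+\widehat B_{t+1}K)\alpha_{N|t}^*$, $\beta_{N|t+1}=\zeta_1\beta_{N|t}^*+\zeta_2$, $v^{[j]}_{N-1|t+1}=Ks^{[j]}_{N-1|t+1}$. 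For the input at the new initial section I would write $\hat x_{t+1}=\sum_{k}\tau^{[k]}s^{[k]}_{1|t}$ as a convex combination (possible by Lemma \ref{Lrf1}) and set $v^{[j]}_{0|t+1}=\sum_k\tau^{[k]}v^{[k]\,*}_{1|t}$ for every $j$.

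Then I would verify the COCP constraints for this candidate one at a time. Constraints \eqref{cons1m}, \eqref{cons7m}, \eqref{cons8m} hold by construction (the last two with equality). For \eqref{cons2m}/\eqref{cons3m}: each shifted section satisfies $\mathbb{T}^{\hat x}_{i|t+1}=\mathbb{T}^{\hat x}_{i+1|t}\subseteq\widehat{\mathbb{X}}_{t+i+1}=\widehat{\mathbb{X}}_{(t+1)+i}$ directly, the control tubes inherit $\mathbb{T}^u_{i|t+1}\subseteq\mathbb{U}$, and the appended section stays in $\widehat{\mathbb{X}}_{\mathrm{TS}}$ by the invariance property in Assumption \ref{AsmTS} applied with $(\widehat A_{t+1},\widehat B_{t+1})\in\mathbb{M}_\Psi$ (which also yields $K\widehat{\mathbb{X}}_{\mathrm{TS}}\subseteq\mathbb{U}$ for $\mathbb{T}^u_{N-1|t+1}$), using the monotonicity $\widehat{\mathbb{X}}_t\subseteq\widehat{\mathbb{X}}_{t+1}$ read off \eqref{Xtildedyn} where needed. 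The only substantive step is the tube-propagation constraint \eqref{cons6m}.

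For \eqref{cons6m} on the shifted indices $i\in\mathbb{I}_0^{N-2}$ I would use the identity $\widehat A_{t+1}s^{[j]}_{i|t+1}+\widehat B_{t+1}v^{[j]}_{i|t+1}=\big(\widehat A_t s^{[j]}_{i|t+1}+\widehat B_t v^{[j]}_{i|t+1}\big)+[\widehat Y\;\;U]\big(\hat p_{t+1}-\hat p_t\big)$ with $\widehat Y=I_n\otimes(Cs^{[j]}_{i|t+1})^\intercal$ and $U=I_n\otimes(v^{[j]}_{i|t+1})^\intercal$. The first bracket equals $\widehat A_t s^{[j]}_{i+1|t}+\widehat B_t v^{[j]}_{i+1|t}\in\mathbb{T}^{\hat x}_{i+2|t}\ominus\mathcal{E}_{t,i+1}$ by \eqref{cons6m} at time $t$; the correction term lies in $\widehat\Omega_{(t+1)+i}$, because $\hat p_{t+1},\hat p_t\in\Pi$ forces $\hat p_{t+1}-\hat p_t\in\widetilde\Psi$ (the difference of two points of a polytope lies in the convex hull of its vertex differences) while $s^{[j]}_{i+1|t}\in\widehat{\mathbb{X}}_{(t+1)+i}$ puts $\widehat Y\in\widehat{\mathbf Y}_{(t+1)+i}$. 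Reading the telescoping identity $\mathcal{E}_{t,i+1}=\mathcal{E}_{t+1,i}\oplus\widehat\Omega_{(t+1)+i}$ off \eqref{Obar}--\eqref{epsilset} and using $(\mathcal{Z}\ominus\mathcal{Q})\oplus\mathcal{Q}\subseteq\mathcal{Z}$, the sum lands in $\mathbb{T}^{\hat x}_{i+2|t}\ominus\mathcal{E}_{t+1,i}=\mathbb{T}^{\hat x}_{i+1|t+1}\ominus\mathcal{E}_{t+1,i}$, which is precisely \eqref{cons6m} at $t+1$; the case $i=0$ follows from the same bound applied to each $s^{[k]}_{1|t}$ and averaged with the weights $\tau^{[k]}$. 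For the appended index $i=N-1$ I would write $(\widehat A_{t+1}+\widehat B_{t+1}K)s^{[j]}_{N|t}\in(\widehat A_{t+1}+\widehat B_{t+1}K)\alpha_{N|t}^*\oplus(\zeta_1\beta_{N|t}^*)\mathbb{H}$ via \eqref{zeta1}, and combine it with $\mathcal{E}_{t+1,N-1}\subseteq\bar\Omega_\varepsilon$ (noted after Assumption \ref{AsmTS}), $\bar\Omega_\varepsilon\subseteq\zeta_2\mathbb{H}$ from \eqref{zeta2}, and the convexity identity $(\zeta_1\beta_{N|t}^*)\mathbb{H}\oplus\zeta_2\mathbb{H}=(\zeta_1\beta_{N|t}^*+\zeta_2)\mathbb{H}$, to conclude $(\widehat A_{t+1}+\widehat B_{t+1}K)s^{[j]}_{N|t}\in\mathbb{T}^{\hat x}_{N|t+1}\ominus\mathcal{E}_{t+1,N-1}$. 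With every constraint verified, $\theta_{t+1}$ is feasible at $t+1$, and the induction closes the claim for all $t+i$, $i\in\mathbb{I}_1^\infty$.

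The main obstacle I anticipate is the bookkeeping across the one-step shift: because the linearisation point moves, $\widehat A_t\to\widehat A_{t+1}$, and the error sets $\mathcal{E}_{t,i}$, $\widehat\Omega_t$, $\widetilde{\mathbb{X}}_t$, $\Omega^{yu}_t$ are all time-varying, one must check that the discrepancy caused by re-linearising at $\widehat A_{t+1}$ is always exactly the single extra copy of $\widehat\Omega$ that the Pontryagin slack $\mathcal{E}_{t,i+1}\ominus\mathcal{E}_{t+1,i}$ can absorb, and that the terminal step closes using only the offline quantities $\zeta_1$, $\zeta_2$, $\bar\Omega_\varepsilon$ and the invariance in Assumption \ref{AsmTS}. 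Everything else is the routine shift-and-append verification standard in homothetic-tube MPC.
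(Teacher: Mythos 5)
Your proof is correct and follows essentially the same shift-and-append construction as the paper's: the same candidate tubes, the same re-linearisation identity producing the extra $\begin{bmatrix}\widehat Y & U\end{bmatrix}(\hat p_{t+1}-\hat p_t)$ term absorbed via the telescoping $\mathcal{E}_{t,i+1}=\mathcal{E}_{t+1,i}\oplus\widehat\Omega_{t+1+i}$ together with $(\mathcal{Z}\ominus\mathcal{Q})\oplus\mathcal{Q}\subseteq\mathcal{Z}$, and the same terminal-step argument using $\zeta_1$, $\zeta_2$ and Assumption \ref{AsmTS}. The only slip is an index: the shifted state sections should run over $i\in\mathbb{I}_1^{N-1}$ (so that $s^{[j]}_{N-1|t+1}={s^{[j]}_{N|t}}^*$, which your terminal input $v^{[j]}_{N-1|t+1}=Ks^{[j]}_{N-1|t+1}$ implicitly requires), while only the input sections stop at $N-2$.
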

\begin{proof}
We begin by proposing a feasible solution of the COCP at time $t+1$ using the optimal $\theta_t^*$ obtained by solving \eqref{MPC2} at time $t$. Let the state estimate and control tubes, respectively, be 
\begin{align}
&  \mathcal{T}^{\hat{x}}_{t+1}= \left\{ \mathbb{T}^{\hat{x}}_{0|t+1}, \mathbb{T}^{\hat{x}}_{1|t+1},...,\mathbb{T}^{\hat{x}}_{N|t+1} \right\}, \text{ and}\label{rf1}\\
&  \mathcal{T}^u_{t+1}= \left\{ \mathbb{T}^u_{0|t+1}, \mathbb{T}^u_{1|t+1},...,\mathbb{T}^u_{N-1|t+1} \right\},\text{ where} \label{rf2}\\
&\mathbb{T}^{\hat{x}}_{i|t+1}={\mathbb{T}^{\hat{x}}_{i+1|t}}^* \;\;\forall i\in\mathbb{I}_1^{N-1}, \;
\label{rf3}\\
&
\mathbb{T}^u_{i|t+1}={\mathbb{T}^{u}_{i+1|t}}^*\;\;\forall i\in\mathbb{I}_1^{N-2}, \label{rf3a}\\
&\mathbb{T}^{\hat{x}}_{N|t+1}= ( \widehat{A}_{t+1}+\widehat{B}_{t+1}K ) \alpha_{N|t}^{*} \nonumber\\
&\;\;\;\;\;\;\;\;\;\;\;\;\;\;\oplus \left( ( \widehat{A}_{t+1}+\widehat{B}_{t+1}K ) \beta_{N|t}^*+\zeta_2 \right)\mathbb{H} \subseteq \widehat{\mathbb{X}}_\text{TS},  \label{rf3b}\\
&\mathbb{T}^u_{N-1|t+1}=K \mathbb{T}^{\hat{x}}_{N-1|t+1}=K{\mathbb{T}^{\hat{x}}_{N|t}}^* \subseteq K\widehat{\mathbb{X}}_{\text{TS}} \subseteq\mathbb{U} , \label{rf4}\\
&\mathbb{T}^{\hat{x}}_{0|t+1}=\{\hat{x}_{t+1}\}\Rightarrow \alpha_{0|t+1}=\hat{x}_{t+1}\text{ and }\beta_{0|t+1}=0. \label{rf5}
\end{align}
    
Since the COCP is feasible at time $t$, the control $u_{0|t}$ will drive the state $\hat{x}_{t+1}$ to belong to the set ${\mathbb{T}_{1|t}^{\hat{x}}}^*$. Let $\tau^{[1]}_{(1,\;t)},\tau^{[2]}_{(1,\;t)},...,\tau^{[\delta]}_{(1,\;t)}$, where $\sum_{j=1}^\delta\tau^{[j]}_{(1,\;t)}=1$ and $0\leq \tau^{[j]}_{(1,\;t)}\leq 1\;\;\forall j\in\mathbb{I}_1^\delta$, be the convex combination satisfying
\begin{align}
    \hat{x}_{t+1}=\hat{x}_{1|t}=\textstyle\sum_{j=1}^\delta \tau^{[j]}_{(1,\;t)} {s^{[j]}_{1|t}}^*. \label{rf6}
\end{align}
Then, the control tube section $\mathbb{T}_{0|t+1}^u$ is proposed as
\begin{align}
    \mathbb{T}^u_{0|t+1}=\{u_{t+1}\}=\{u_{1|t}\}=\{\textstyle\sum_{j=1}^\delta \tau^{[j]} _{(1,\;t)}{v^{[j]}_{1|t}}^*\}. \label{rf7}
\end{align}

Following \eqref{rf3a}-\eqref{rf7}, the vertices of the tube sections are
\begin{align}
    & s^{[j]}_{0|t+1}=\hat{x}_{t+1}=\textstyle\sum_{k=1}^\delta \tau^{[k]}_{(1,\;t)} {s^{[k]}_{1|t}}^*\;\;\forall j\in\mathbb{I}_1^\delta, \label{RecF1}\\
    & v^{[j]}_{0|t+1}=\hat{u}_{t+1}=\textstyle\sum_{k=1}^\delta \tau^{[k]}_{(1,\;t)} {v^{[k]}_{1|t}}^*\;\;\forall j\in\mathbb{I}_1^\delta, \label{RecF2}\\
    & s^{[j]}_{i|t+1}={s^{[j]}_{i+1|t}}^*\;\;\forall (i,\;j)\in\mathbb{I}_1^{N-1}\times \mathbb{I}_1^\delta, \label{RecF3}\\
    &v^{[j]}_{i|t+1}={v^{[j]}_{i+1|t}}^*\;\;\forall (i,\;j)\in\mathbb{I}_1^{N-2}\times \mathbb{I}_1^\delta, \label{RecF4}\\
    &s^{[j]}_{N|t+1}=(\widehat{A}_{t+1}+\widehat{B}_{t+1}K )\alpha_{N|t}^*+ \nonumber\\
    & \;\;\;\;\;\;\;\left( (\widehat{A}_{t+1}+\widehat{B}_{t+1}K )\beta_{N|t}^* +\zeta_2 \right)h^{[j]}\;\;\forall j\in\mathbb{I}_1^\delta,\\
    &v^{[j]}_{N-1|t+1}= K{s^{[j]}_{N|t}}^*\;\;\forall j\in\mathbb{I}_1^\delta.\label{RecF5}
\end{align}

The conditions \eqref{Tube1}-\eqref{Tube3}, \eqref{cons1m}-\eqref{cons3m} hold at time $t+1$ based on the construction in \eqref{rf1}-\eqref{rf7} and the properties of $\widehat{\mathbb{X}}_{\text{TS}}$ in Assumption \ref{AsmTS}. For proving satisfaction of \eqref{cons6m}, we start with its left-hand side at time $t+1$ for $i=0$ and $\forall j\in\mathbb{I}_1^\delta$. Simple manipulations using \eqref{RecF1} and \eqref{RecF2} allow us to write 
\begin{align*}
 & \;\widehat{A}_{t+1}s_{0|t+1}^{[j]}+\widehat{B}_{t+1}v_{0|t+1}^{[j]} \\
  = &\;\widehat{A}_{t+1} {\hat{x}_{t+1}} +\widehat{B}_{t+1}  {u_{t+1}} \\
  = &\;\widehat{A}_{t}{\hat{x}_{t+1}} +\widehat{B}_{t}  {u_{t+1}} + \begin{bmatrix}
\widehat{Y}_{t+1} & U_{t+1}
\end{bmatrix}(\hat{p}_{t+1}-\hat{p}_t)\\
= &\;\sum_{k=1}^\delta \tau^{[k]}_{(1,\;t)}\left( \widehat{A}_{t} {s_{1|t}^{[k]}}^*+\widehat{B}_{t} {v^{[k]}_{1|t}}^* \right)\\
& \;\;\;\;\;\;\;\;\;\;\;\;\;\;\;\;\;+\begin{bmatrix}    \widehat{Y}_{t+1} & U_{t+1}
\end{bmatrix}(\hat{p}_{t+1}-\hat{p}_t)\\
 \in &\; {\mathbb{T}^{\hat{x}}_{2|t}}^* \ominus\mathcal{E}_{t,1}\oplus\widehat{\Omega}_{t+1} \\
=&\;{\mathbb{T}^{\hat{x}}_{2|t}}^*\ominus\{ \bar{\Omega}_{t+1}\oplus  \widehat{\Omega}_{t+1} \oplus \Omega^{yu}_{t+2}\}\oplus\widehat{\Omega}_{t+1} \\
\subseteq & \;{\mathbb{T}^{\hat{x}}_{2|t}}^*\ominus\{ \bar{\Omega}_{t+1} \oplus \Omega^{yu}_{t+2}\}\\
=& \;\mathbb{T}^{\hat{x}}_{1|t+1}\ominus\mathcal{E}_{t+1,0}.  
\end{align*}


Similarly, using \eqref{RecF3} and \eqref{RecF4} at time $t+1$ $\forall (i,j)\in\mathbb{I}_1^{N-2}\times\mathbb{I}_1^\delta$, we see that 
\begin{align*}
    &\;\widehat{A}_{t+1}s_{i|t+1}^{[j]}+\widehat{B}_{t+1}v_{i|t+1}^{[j]}\\
    =&\; \widehat{A}_{t}{s_{i+1|t}^{[j]}}^*+\widehat{B}_{t} {v^{[j]}_{i+1|t}}^*+\begin{bmatrix}    {\widehat{Y}^{[j]}_{i+1|t}}\;^* & {U_{i+1|t}^{[j]}}^*
\end{bmatrix}(\hat{p}_{t+1}-\hat{p}_t) \\
\in & \;{\mathbb{T}^{\hat{x}}_{i+2|t}}^* \ominus\mathcal{E}_{t,i+1}\oplus\widehat{\Omega}_{t+i+1}\\
\subseteq &\; \mathbb{T}^{\hat{x}}_{i+1|t+1}\ominus\mathcal{E}_{t+1,i},
\end{align*}
where ${\widehat{Y}^{[j]}_{i+1|t}}\;^* \triangleq I_n\otimes \left( C{s_{i+1|t}^{[j]} }^*\right)^{\intercal}$ and ${U_{i+1|t}^{[j]}}^*\triangleq I_n\otimes \left({v^{[j]}_{i+1|t}}^*\right)^{\intercal}$. 

Finally, using \eqref{RecF3} and \eqref{RecF5} for $i=N-1$ and $\forall j$ $ \in \mathbb{I}_1^\delta$, we can write 
\begin{align*}
    &\; \widehat{A}_{t+1}s_{N-1|t+1}^{[j]}+\widehat{B}_{t+1}v_{N-1|t+1}^{[j]}\\
     =& \;( \widehat{A}_{t+1}+\widehat{B}_{t+1}K ) {s_{N|t}^{[j]}}^*\\
    \in &\; ( \widehat{A}_{t+1}+\widehat{B}_{t+1}K ) \alpha_{N|t}^*  \oplus  ( \widehat{A}_{t+1}+\widehat{B}_{t+1}K ) \beta_{N|t}^*\mathbb{H}\\
    \subseteq &\; \mathbb{T}_{N|t+1}^{\hat{x}}\ominus\zeta_2\mathbb{H}\\
    \subseteq  &\; \mathbb{T}_{N|t+1}^{\hat{x}}\ominus \mathcal{E}_{t+1,N-1}\;(\because\mathcal{E}_{t+1,N-1}    \subseteq  \zeta_2\mathbb{H}).
\end{align*}

These above three results together prove that \eqref{cons6m} is satisfied at time $t+1$. Therefore, the solution proposed in \eqref{rf1}-\eqref{RecF5} is feasible at $t+1$. 

Using the method of mathematical induction, it can be similarly proved that solutions exist at $t+2$, $t+3$, and so on. Therefore, the COCP is recursively feasible.
\end{proof}
\subsubsection{Stability and Boundedness}\label{stabbound}
In addition to the parametric uncertainty, the system dynamics in \eqref{osys1} contains an added disturbance, while the dynamics \eqref{adob0} or \eqref{preddyn} used to predict the future observer states contains prediction uncertainty that acts as a disturbance-like term. Convergence of $x_t$ and $\hat{x}_t$ to the respective origins of \eqref{osys1} and \eqref{adob0} cannot be guaranteed owing to the presence of the disturbances. Instead, we prove robust exponential stability \cite[Def.~2]{mayne2006robust}, i.e., the trajectories of the state and state estimate converge exponentially to sets dependent on their corresponding uncertainty uniform bounds. 

\begin{lemma}\label{Lx0}
    If the COCP \eqref{MPC2} is recursively feasible and Assumptions \ref{ABa}-\ref{AsmTS} hold, then $\tilde{x}_t$ exponentially converges to an element in the set $\Omega_\infty^{yu}\oplus \Omega_\infty^d$. Additionally, if $\tilde{p}_t\rightarrow 0$ as $t\rightarrow\infty$, then $\tilde{x}_t$ exponentially converges to the set $\Omega_\infty^d$.
\end{lemma}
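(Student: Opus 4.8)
The plan is to work directly from the closed-form expression \eqref{xtil2} for the state estimation error, $\tilde{x}_t = F^t\tilde{x}_0 + \sum_{k=0}^{t-1}F^{t-1-k}\left([Y_k\;\;U_k]\tilde{p}_t + d_k\right)$, and decompose the right-hand side into a transient term $F^t\tilde{x}_0$, a regressor term $\Sigma^{yu}_t\triangleq\sum_{k=0}^{t-1}F^{t-1-k}[Y_k\;\;U_k]\tilde{p}_t$, and a disturbance term $\Sigma^d_t\triangleq\sum_{k=0}^{t-1}F^{t-1-k}d_k$. Since $F$ is Schur stable there are $c>0$, $\rho\in[0,1)$ with $\|F^t\|\le c\rho^t$, and because $\tilde{x}_0\in\widetilde{\mathbb{X}}_0$ (a bounded set by Assumption \ref{Awo1}) we get $\|F^t\tilde{x}_0\|\le c\rho^t\|\tilde{x}_0\|\to 0$ exponentially.

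Next I would show that $\Sigma^d_t\in\Omega_\infty^d$ and $\Sigma^{yu}_t\in\Omega_\infty^{yu}$ for every $t$. For $\Sigma^d_t$ this follows by induction from $F\Omega_\infty^d\oplus\mathbb{D}\subseteq\Omega_\infty^d$ and $0\in\Omega_\infty^d$, which give $\bigoplus_{j=0}^{t-1}F^j\mathbb{D}\subseteq\Omega_\infty^d$. For $\Sigma^{yu}_t$ the key point is that every summand $[Y_k\;\;U_k]\tilde{p}_t$ lies in $\Omega_1^{yu}$: recursive feasibility together with Lemma \ref{Lrf1} gives $x_k\in\mathbb{X}$, hence $y_k=Cx_k\in C\mathbb{X}$ so $Y_k\in\mathbf{Y}$, while $u_k\in\mathbb{U}$ gives $U_k\in\mathbf{U}$, and $\tilde{p}_t=\psi-\hat{\psi}_t\in\widetilde{\Psi}$ because both $\psi$ and $\hat{\psi}_t$ lie in the polytope $\Psi$ (Assumption \ref{ABa} and the projection in \eqref{gd1}), so that $\psi-\hat{\psi}_t\in\Psi\oplus(-\Psi)=\widetilde{\Psi}$. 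Then $F\Omega_\infty^{yu}\oplus\Omega_1^{yu}\subseteq\Omega_\infty^{yu}$ and $0\in\Omega_\infty^{yu}$ yield $\bigoplus_{j=0}^{t-1}F^j\Omega_1^{yu}\subseteq\Omega_\infty^{yu}$, hence $\Sigma^{yu}_t\in\Omega_\infty^{yu}$. Combining, $\tilde{x}_t-(\Sigma^{yu}_t+\Sigma^d_t)=F^t\tilde{x}_0$ with $\Sigma^{yu}_t+\Sigma^d_t\in\Omega_\infty^{yu}\oplus\Omega_\infty^d$ and $\|F^t\tilde{x}_0\|\le c\rho^t\|\tilde{x}_0\|$, which is precisely the claimed exponential convergence of $\tilde{x}_t$ to an element of $\Omega_\infty^{yu}\oplus\Omega_\infty^d$ (in particular $d(\tilde{x}_t,\Omega_\infty^{yu}\oplus\Omega_\infty^d)\le c\rho^t\|\tilde{x}_0\|$).

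For the second statement I would instead keep the regressor contribution in the compact form $M_t\tilde{p}_t$, using $M_t=\sum_{k=0}^{t-1}F^{t-1-k}[Y_k\;\;U_k]$ from \eqref{FilterM}, so that $\tilde{x}_t-\Sigma^d_t=F^t\tilde{x}_0+M_t\tilde{p}_t$. Under recursive feasibility the plant input and output are uniformly bounded (Lemma \ref{L01} and the discussion following it), so $M_t\in\mathcal{L}_\infty$; with the hypothesis $\tilde{p}_t\to 0$ this gives $M_t\tilde{p}_t\to 0$ (exponentially at the rate of $\tilde{p}_t$ whenever that convergence is exponential). Since also $F^t\tilde{x}_0\to 0$ exponentially and $\Sigma^d_t\in\Omega_\infty^d$ for all $t$, it follows that $\tilde{x}_t$ converges to an element of $\Omega_\infty^d$.

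The main obstacle I anticipate is the set-theoretic bookkeeping behind the inclusions $\Sigma^{yu}_t\in\Omega_\infty^{yu}$ and $\Sigma^d_t\in\Omega_\infty^d$ — in particular justifying $[Y_k\;\;U_k]\tilde{p}_t\in\Omega_1^{yu}$, which requires invoking recursive feasibility (Theorem \ref{theo1}, Lemma \ref{Lrf1}) to pin down $y_k\in C\mathbb{X}$ and $u_k\in\mathbb{U}$ for all $k\le t-1$, and the polytope identity $\Psi\oplus(-\Psi)=\widetilde{\Psi}$ — together with the care needed because it is $\tilde{p}_t$, not $\tilde{p}_k$, that appears inside the sum: this is harmless for the first claim but is exactly what forces the use of boundedness of $M_t$ (hence of Lemma \ref{L01}) in the second.
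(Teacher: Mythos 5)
Your proof is correct and takes exactly the route the paper intends: the paper's own proof is the one-line remark that the result ``directly follows from \eqref{xtil2} by exploiting the Schur stable property of $F$,'' and your argument supplies precisely the omitted bookkeeping (the three-term decomposition of \eqref{xtil2}, the inductive inclusions $\Sigma^{yu}_t\in\Omega^{yu}_\infty$ and $\Sigma^d_t\in\Omega^d_\infty$ via the RPI conditions, and $M_t\in\mathcal{L}_\infty$ for the second claim). Your side remark that the second claim is only genuinely \emph{exponential} if $\tilde{p}_t\to 0$ exponentially is a fair observation about the lemma's wording, not a gap in your argument.
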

\begin{proof}
The proof directly follows from \eqref{xtil2} by exploiting the Schur stable property of $F$.
\end{proof}

\begin{theo}\label{theoSTability}
Suppose Assumptions \ref{ABa}-\ref{AsmTS} hold, $x_0\in\mathbb{X}$, $\hat{x}_0\in\widehat{\mathbb{X}}_0$ and the COCP \eqref{MPC2} is feasible at $t=0$. Then, by virtue of Lemmas \ref{L01}-\ref{Lx0} and Theorem \ref{theo1}, all the signals are bounded $\forall t\in\mathbb{I}_0^\infty$, and the dynamics in \eqref{osys1} {{and \eqref{adob0} exhibit robust exponential stability \cite[Def.~2]{mayne2006robust}}}. 
\end{theo}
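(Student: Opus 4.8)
\textbf{Proof plan for Theorem \ref{theoSTability}.}

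The plan is to assemble the conclusion from the building blocks already established, with the main new work being a Lyapunov argument for robust exponential stability of the state‑estimate dynamics \eqref{adob0} using the optimal cost $J(\hat x_t,\theta_t^*)$ as the candidate function. First I would record the boundedness facts: by Theorem \ref{theo1} the COCP is recursively feasible, so at every $t$ the applied input satisfies $u_t\in\mathbb U$ and hence $\|u_t\|\le\bar u\triangleq\max_{u\in\mathbb U}\|u\|<\infty$; by Lemma \ref{Lrf1} the true state stays in $\mathbb X$, so $\|y_t\|\le\bar y\triangleq\max_{x\in\mathbb X}\|Cx\|<\infty$. These are exactly the hypotheses of Lemma \ref{L01}, so $e_t,\ \Gamma_te_t,\ \tilde p_t\in\mathcal L_\infty$, $\|\hat p_t-\hat p_{t-1}\|\in\mathcal S(\Delta_0^2)$, and $\tilde x_t\in\mathcal L_\infty$; together with \eqref{FilterM} and the Schur stability of $F$ this gives $M_t,\phi_t\in\mathcal L_\infty$, and then $\hat x_t=x_t-\tilde x_t\in\mathcal L_\infty$. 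So all signals are bounded for all $t\in\mathbb I_0^\infty$.

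Next I would handle the outer‑tube (true state) claim, which is essentially immediate: Lemma \ref{Lx0} already shows $\tilde x_t$ converges exponentially to $\Omega_\infty^{yu}\oplus\Omega_\infty^d$ (and to $\Omega_\infty^d$ if $\tilde p_t\to0$), using only the Schur property of $F$ in \eqref{xtil2}. Thus it remains to prove that $\hat x_t$ converges exponentially to a bounded set determined by the uniform bound on the prediction uncertainty $\varepsilon$, since then $x_t=\hat x_t+\tilde x_t$ converges exponentially to the Minkowski sum of the two limiting sets, which is the statement of \cite[Def.~2]{mayne2006robust} for \eqref{osys1}. For $\hat x_t$ I would use the standard tube‑MPC Lyapunov route: let $V_t\triangleq J(\hat x_t,\theta_t^*)$. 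Using the feasible (shifted) solution constructed in \eqref{rf1}--\eqref{RecF5} inside the proof of Theorem \ref{theo1} as an upper bound for $V_{t+1}$, one gets a decrease inequality of the form $V_{t+1}-V_t\le-\sum_{j=1}^\delta\big(\|s_{0|t}^{[j]}\|_Q^2+\|v_{0|t}^{[j]}\|_R^2\big)+c\,\|\hat p_{t+1}-\hat p_t\|^2+(\text{terminal terms})$, where the terminal terms are nonpositive by Assumption \ref{ABa} (inequality \eqref{Asmeq}) applied to the terminal summands $\|s_{N|t}^{[j]}\|_P^2$, exactly as in \cite{rakovic2012homothetic,lorenzen2019robust}. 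The perturbation $\|\hat p_{t+1}-\hat p_t\|^2$ lies in $\mathcal S(\Delta_0^2)$ by Lemma \ref{L01}, i.e.\ it is summable up to a term linear in the horizon with slope $\propto\Delta_0^2$, so it plays the role of a vanishing‑in‑the‑mean disturbance. Combining the quadratic upper and lower bounds on $V_t$ in terms of $\|\hat x_t\|$ (standard, using $Q,P\succ0$, $\beta_{0|t}=0$, and the polytopic structure of $\mathbb H$) with this decrease inequality yields $\hat x_t\to\Theta$ exponentially, where $\Theta$ is a ball whose radius scales with $\Delta_0$ (equivalently with the bound on $\Omega_\varepsilon$); this is robust exponential stability of \eqref{adob0}.

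The main obstacle I expect is making the Lyapunov decrease rigorous in the presence of the time‑varying estimates $\widehat A_t,\widehat B_t$ and the time‑dependent constraint/uncertainty sets $\widehat{\mathbb X}_{t+i},\mathcal E_{t,i}$: the shifted candidate solution changes the terminal piece via $\widehat A_{t+1}+\widehat B_{t+1}K$ rather than $\widehat A_t+\widehat B_tK$, so the cross terms coming from $(\widehat A_{t+1}-\widehat A_t)$ and $(\widehat B_{t+1}-\widehat B_t)$ must be bounded — and these differences are controlled precisely by $\|\hat p_{t+1}-\hat p_t\|$ through $\mathcal M_A,\mathcal M_B$, which again lies in $\mathcal S(\Delta_0^2)$. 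One must also check that the nested constraint sets cooperate, i.e.\ that $\widehat{\mathbb X}_{t+i}\subseteq\widehat{\mathbb X}_{t+1+i}$ (shown after \eqref{tight2}) and $\mathcal E_{t,i}\subseteq\Omega_\varepsilon$ for all $t,i$ (by the offline over‑bounding in Assumption \ref{BasicP}), so that the shifted solution remains feasible and the bound on $V_{t+1}$ is uniform in $t$. Once these perturbation terms are absorbed into the $\mathcal S(\Delta_0^2)$ accounting, the exponential convergence of $\hat x_t$ — and hence, with Lemma \ref{Lx0}, of $x_t$ — follows by the usual comparison‑lemma argument for perturbed Lyapunov inequalities, completing the proof.
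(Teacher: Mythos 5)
Your boundedness argument matches the paper's: recursive feasibility gives $u_t\in\mathbb{U}$, Lemma \ref{Lrf1} plus induction gives $x_t\in\mathbb{X}$, these feed Lemma \ref{L01}, and the apparent circularity is resolved by the interleaved COCP/observer induction (the paper isolates this in Remark \ref{remnew}). Where you genuinely diverge is the stability part. The paper never touches the value function: it reads exponential convergence directly off the two extra constraints \eqref{cons7m}--\eqref{cons8m}, which force the terminal tube center to obey $\alpha_{N|t}=(\widehat{A}_t+\widehat{B}_tK)\alpha_{N|t-1}$ (contracting by Assumption \ref{ABa}) and the terminal scaling to obey $\beta_{N|t}\le\zeta_1\beta_{N|t-1}+\zeta_2$ with $\zeta_1<1$, hence $\alpha_{N|t}\to 0$ and $\beta_{N|t}\to\bar\beta\le(1-\zeta_1)^{-1}\zeta_2$ exponentially in real time $t$; the inner tube cross-section therefore converges to $\bar\beta\mathbb{H}$, and Lemma \ref{Lx0} adds $\Omega_\infty^{yu}\oplus\Omega_\infty^d$ for the outer tube. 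This is why those two constraints were inserted into the COCP in the first place. Your Lyapunov route with $V_t=J(\hat{x}_t,\theta_t^*)$ is the classical homothetic-tube template and would, if completed, give a quantitative decrease rate tied to $Q,R,P$ rather than to the spectral radius of $\widehat{A}+\widehat{B}K$ and $\zeta_1$; but it is strictly heavier machinery than the paper needs, precisely because \eqref{cons7m}--\eqref{cons8m} already hard-wire the contraction.

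Two points in your version are not yet sound as stated. First, $\|\hat{p}_{t+1}-\hat{p}_t\|\in\mathcal{S}(\Delta_0^2)$ is a bounded-in-the-mean property (running sums grow like $c_0\Delta_0^2 k+c_1$), not a vanishing-in-the-mean one; treating it as a vanishing perturbation would overstate the conclusion, and with only mean-boundedness the standard comparison argument gives convergence to a set whose size depends on $\Delta_0$ --- which is acceptable for robust exponential stability but must be argued with an ISS-type bound, not a vanishing-perturbation lemma. Second, the claimed nonpositivity of the terminal increment is not immediate: the shifted terminal vertices carry the additive $\zeta_2 h^{[j]}$ term and the drift $(\widehat{A}_{t+1}-\widehat{A}_t)+(\widehat{B}_{t+1}-\widehat{B}_t)K$, so $\|s^{[j]}_{N|t+1}\|_P^2-\|s^{[j]*}_{N|t}\|_P^2$ contains cross terms that are only bounded, not nonpositive, and the quadratic upper bound on $V_t$ in terms of $\|\hat{x}_t\|$ (needed to convert the decrease into an exponential rate) is asserted but not established for this time-varying, set-valued cost. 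None of these obstacles arise in the paper's direct argument, so if you keep your route you must close them explicitly; otherwise the shorter path through \eqref{cons7m}--\eqref{cons8m} is the one to take.
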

\begin{proof}
Since the COCP is initially feasible, combining Lemma \ref{Lrf1} and Theorem \ref{theo1} and using the method of mathematical induction, it can be shown that $\varepsilon_{t,i}\in \mathcal{E}_{t,i} $ $\forall (t,i)\in\mathbb{I}_0^\infty\times \mathbb{I}_0^{N-1}$, $\tilde{x}_{t}\in\widetilde{\mathbb{X}}_t $, $\hat{x}_{t}\in\widehat{\mathbb{X}}_t $ and $x_{t}\in\mathbb{X}$ $\forall t\in\mathbb{I}_0^\infty$. Also, Theorem \ref{theo1} guarantees recursive feasibility that implies that $u_t\in\mathbb{U}$ $\forall t\in\mathbb{I}_0^\infty$. The projection operator in \eqref{gd1} ensures that $\hat{p}_t\in\Pi$ $\forall t\in\mathbb{I}_0^\infty$. By definition, $M_t,\;\phi_t\in\mathcal{L}_\infty$ (see Remark \ref{remnew}). Therefore, we can write, $x_t$, $\hat{x}_t$, $\tilde{x}_t$, $u_t$, $\varepsilon_{t,i}$, $\hat{p}_t\in\mathcal{L}_\infty$.

Given the COCP is initially feasible, its recursive feasibility implies that the constraints \eqref{cons7m} and \eqref{cons8m} are satisfied $\forall t\in\mathbb{I}_1^\infty$. Once the state estimate trajectory enters $\widehat{\mathbb{X}}_{\text{TS}}$, the quadratically stabilizing gain $K$ takes over. Thereafter,  following \eqref{cons7m}, the center of the tube sections exponentially converges to the origin as $t\rightarrow \infty$ with a convergence rate ${\lambda}_\alpha \in[\min_{(\widehat{A},\widehat{B})\in\mathbb{M}_\Psi}\lambda(\widehat{A}+\widehat{B}K), \;\max_{(\widehat{A},\widehat{B})\in\mathbb{M}_\Psi}\lambda(\widehat{A}+\widehat{B}K)]\subseteq(-1,1)$.

Further, following the dynamics in \eqref{cons8m}, the scaling factor of the tube cross-sections exponentially converges to a non-negative scalar $\bar \beta$ as $t\rightarrow \infty$ with a convergence rate between $0$ and $\zeta_1$, where
\begin{align}
{\color{black}{\bar \beta\leq(1-\zeta_1)^{-1}\zeta_2}}. 
\end{align} 
The tube cross-sections for $\hat{x}_t$, therefore, converge to $\bar{\beta}\mathbb{H}$ as $t\rightarrow \infty$. The exponential convergence of the tube center and the scaling factor is ensured by design from \eqref{Asmeq} and Assump-\\ tion \ref{AsmTS} with \eqref{zeta1}, respectively. 
Using Lemma \ref{Lx0}, we can conclude that the outer-tube cross-section exponentially converges to $\bar{\beta} \mathbb{H}\;\oplus\;\Omega_\infty^{yu}\oplus\;\Omega_\infty^d$ as $t\rightarrow\infty$. In addition, if $\tilde{p}_t\rightarrow 0$, then the outer-tube cross-section exponentially converges to the set $\bar{\beta} \mathbb{H}\;\oplus\;\Omega_\infty^d$ as $t\rightarrow\infty$. Therefore, the dynamics in \eqref{osys1} and \eqref{adob0} exhibit robust exponential stability.
 \end{proof}
 
\begin{remark}\label{remnew}
{\color{black}{The online phase of the algorithm begins with solving the COCP, as outlined in Step 4,}} with a choice of $\widehat{A}_0$, $\widehat{B}_0$ and $\hat{x}_0$, all of which are bounded. If the COCP is initially feasible, we get a stabilizing $u_0$, and uniformly bounded $u_0$ and $y_1$. {\color{black}{Then, $u_0\leq \bar u$ and $y_0$, $y_1\leq \bar y$ are used by the observer (Step 6)}}. Also, $M_1$ and $\phi_1$ are bounded (by definition). As a result, the estimates $\widehat{A}_1$, $\widehat{B}_1$, $\hat{x}_1$ are bounded (Lemma \ref{L01}, \eqref{xhat}), implying $\tilde x_1$ in \eqref{xtilde} is bounded. {\color{black}{The new estimates are used to solve the COCP again (Step 4). Repeating the same analysis at each time instant in this sequential manner of COCP followed by observer,}} it can be guaranteed that $M_t$, $\phi_t,\;\tilde{x}_t\in\mathcal{L}_\infty$.
\end{remark}
\section{Numerical Example}
Consider the following MIMO LTI system having $4$ states, $2$ inputs and $2$ outputs 
{{\begin{align*}
   &x_{t+1}= \text{\tiny{$\begin{bmatrix}-2.1011   & 0.0065&1&0\\ 0 &  -0.005 &0 & 1\\-0.001  &  0.3019   &      0   &      0   \\ 1.0189  &  0.2    &     0    &     0
   \end{bmatrix}$}}x_t+\text{\tiny{$\begin{bmatrix}12 & 10\\
   3.011 & 0\\
   -0.11 & -1.23\\
   -6.031 & 0.041
   \end{bmatrix}$}}u_t+d_t,\\
   &y_t=\text{\tiny{$\begin{bmatrix}
       x_1^{(1)}&x_t^{(2)}
   \end{bmatrix}^\intercal$}},
\end{align*}}}
with {\color{black}{$||d_t||_\infty \leq 2$}}, {\color{black}{$x_0=\begin{bmatrix}
    -40&-33&32.95&35.96
\end{bmatrix}^\intercal$}}, and hard constraints $||x_t||_\infty\leq 40$, $||u_t||_\infty\leq 4$. The set $\Psi$ has $5$ vertices\footnote{\tiny{$\psi^{[1]}=[ -2.101 \;\;0.0064 \;\;0\;\; -0.005\;\;-0.001\;\;0.3019\;\;1.0189\;\;0.2 \;\;12\;\;10\;\; 3.0101$ $0 \;\; -0.11\;\;-1.23 \;\; -6.03\;\;0.041]^{\intercal}$, $\psi^{[2]}=[-2.102 \;\;0.007\;\; 0\;\; -0.005\;\;-0.001$ $0.3019\;\; 1.0189\;\;0.2\;\;12\;\;10 \;\; 3.0201 \;\;0\;\; -0.11 \;\;-1.23\;\;-6.04\;\; 0.041]^{\intercal}$, $\psi^{[3]}=[-2.103$ $0.0059\;\; 0\;\; -0.005\;\;-0.001\;\;0.3019\;\; 1.0189\;\;0.2\;\;12\;\;10 \;\; 3.0301 \;\;0\;\; -0.11 $ $-1.23\;\;-6.05\;\; 0.041]^{\intercal}$, $\psi^{[4]}=[ -2.100 \;\;0.0064 \;\;0\;\; -0.005\;\;-0.001\;\;0.3019\;\;1.0189$\\$0.2 \;\;12\;\;10\;\; 3.0101\;\;0 \;\; -0.11\;\;-1.23 \;\; -6.025\;\;0.041]^{\intercal}$, $\psi^{[5]}=[-2.103 \;\;0.0072\;\; 0$ $ -0.005\;\;-0.001\;\;0.3019\;\; 1.0189\;\;0.2\;\;12\;\;10 \;\; 3.0301 \;\;0\;\; -0.11$ $ -1.23\;\;-6.02\;\; 0.041]^{\intercal}$.}}
with uncertainties in the 1$^\text{st}$, 2$^\text{nd}$, 11$^\text{th}$ and 15$^\text{th}$ entries of the parameter vector $\psi$. We consider $F=[-0.001\; 0.0064$\\$\;1\; 0;\; 0 $ $-0.005\;0\;1;\;-0.001\; 0.02\;0\;0;\; 0.02\; 0.2001\;0\;0]$, $Q=2I_4$, $R=I_2$,  $K=[0.148\;0.0225 \;-0.0028 \; -0.042;\;0.0263 $ $-0.0265 \;-0.0958 \;0.0383 ]$, {\color{black}{$N=7$}}, $\widetilde{\mathbb{X}}_0$ is the maximal RPI set in $1/5$ of $\mathbb{X}$, $\hat{x}_0=[-32\;-25\;\;27\;\;30]^\intercal$, $\hat \psi_0=0.1\psi^{[1]}+0.9\psi^{[3]}$,\\ $\zeta_1=0.98$, $\zeta_2=0.28$ and $\kappa=0.05$. Fig.s \ref{STube} (a) and (b) demonstrate stabilization performance and constraint satisfaction at all times, and Fig. \ref{STube} (c)
shows that $||\tilde{p}_t||_2$ is bounded.
\footnote{{\color{black}{Total offline computation time was 182.61 seconds, whereas the total time for the online set computation and COCP to run once was 0.927 seconds on average (over 15 time instants), with the use of MATLAB R2021b and toolboxes \cite{MPT3} and \cite{Lofberg2004} in an AMD Ryzen 9 5900HX 330GHz processor, 16 GB RAM, 64-bit operating system}}.} 

\begin{figure}[t!]
\vspace{0.23cm}\centering
\framebox{\parbox{3in}{\includegraphics[scale=0.425]{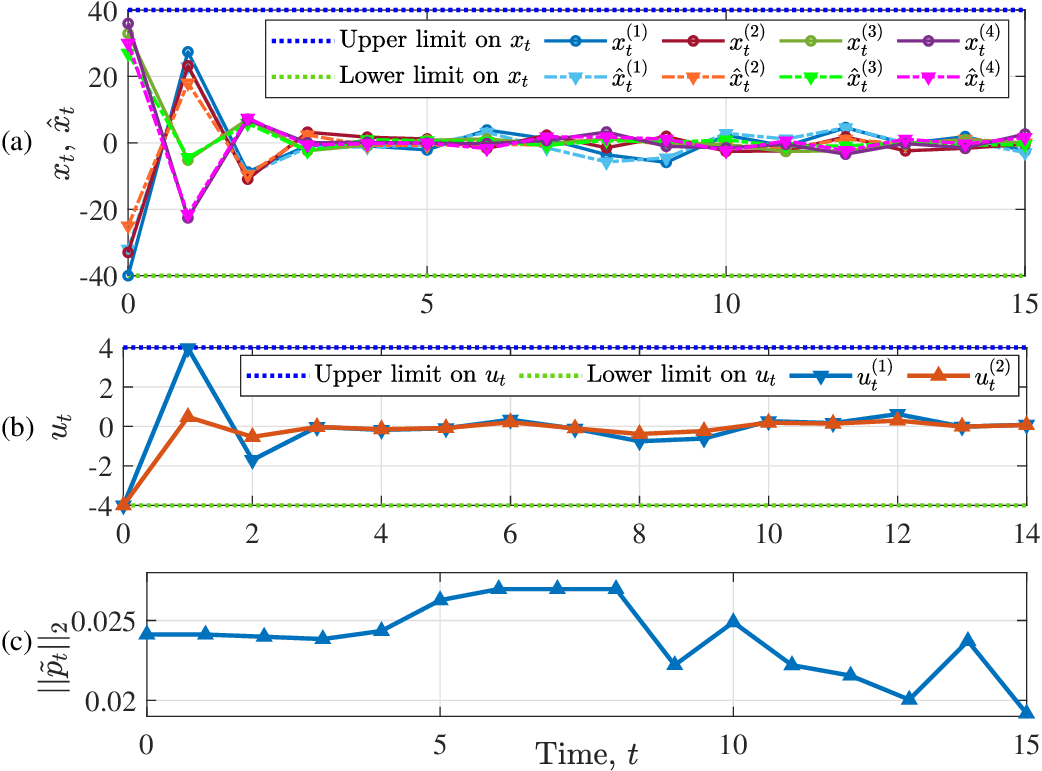}}}
\caption{(a) State and state estimate trajectories, (b) Control input, and (c) 2-norm of parameter estimation error.}       \label{STube}
\end{figure}
\begin{figure}[!t]
\centering
\framebox{\parbox{3in}{\includegraphics[scale=0.216]{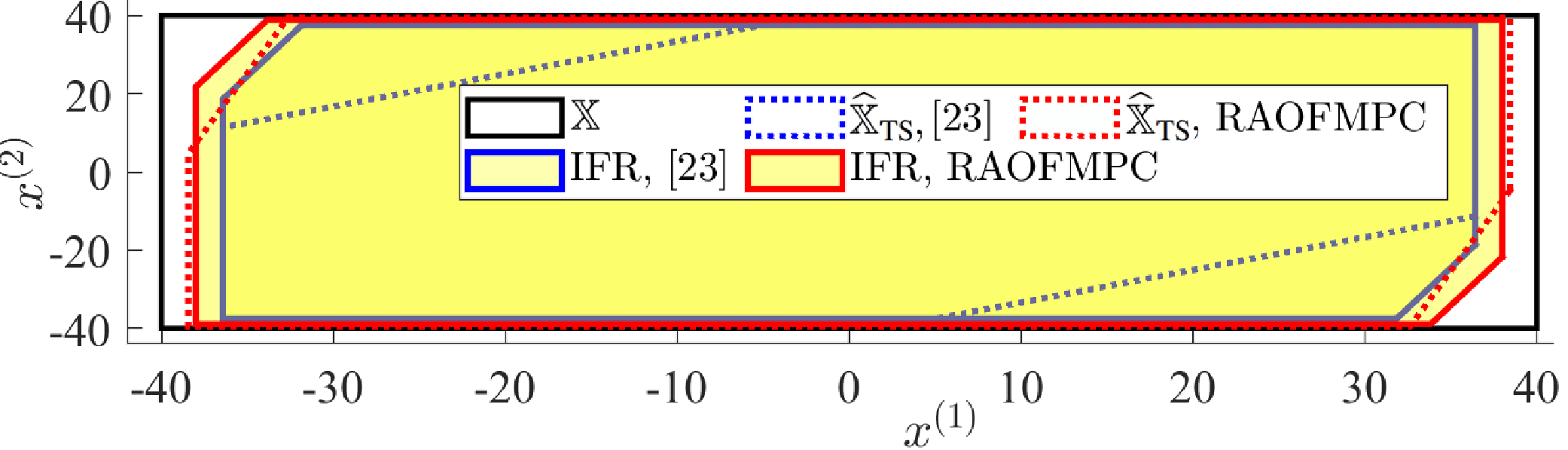}}}
\caption{Initial feasible regions (IFR) for \cite{anch} and the proposed method.} 
\label{sisoset}
\end{figure}
\begin{figure}[!t]
\centering
\framebox{\parbox{3in}{\includegraphics[scale=0.425]{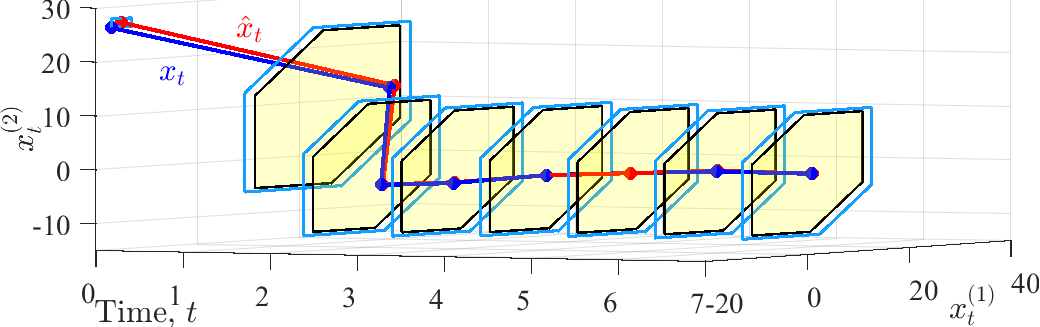}}}
\caption{Inner tube (yellow) and outer tube (blue outline) containing the state estimate (red) and state (deep blue) trajectories, respectively.} 
\label{sisoset1}
\end{figure}


   
Compared to \cite{anch}, where a single set $\widetilde{\mathbb X}\ni\tilde{x}_t$ $\forall t\in\mathbb{I}_0^\infty$ is used to obtain $\widehat{\mathbb X}$, and the terminal set is a subset of $\widehat{\mathbb X}$, here, we get a larger terminal set since $\widehat{\mathbb{X}}_{\text{TS}}\subseteq\widehat{\mathbb{X}}_N\supset \widehat{\mathbb X}$. An increase in the size of $\widehat{\mathbb{X}}_{\text{TS}}$ results in an increased initial feasible region leading to reduced conservatism. This is demonstrated using a SISO system setup having 2 states (for ease of visualization) with $||x_t||_\infty\leq 40$ and $||u_t||_\infty\leq 10$. For the sake of a fair comparison with \cite{anch}, no external disturbances have been considered, i.e., $\mathbb D=\{0_2\}$. The set $\Psi$ has 3 vertices $\psi^{[1]}=[-4.11\;\; 2.2\;\; 15\;\; -10]^\intercal$, $\psi^{[2]}=[-4.12 \;\;2.2 \;\;15.01$ $ -10]^\intercal$ and $\psi^{[3]}=[-4.13\;\; 2.2\;\; 15.02\;\; -10]^\intercal$, and the feedback gain $K=[0.2548 \;\;  -0.0426]$. The initial feasible regions shown in Fig. \ref{sisoset} are computed for $N=7$ following \cite[Sec.~10.3]{borrelli2017predictive}. For a given setup, the initial feasible region of the proposed method is larger in size than that for \cite{anch}. {\color{black}{In addition, we show the two-tube structure for the same setup using the proposed method in Fig. \ref{sisoset1} with $\psi=0.9\psi^{[1]}+0.1\psi^{[2]}$, $F=[0.03 \;\;1;\;\;0.2\;\;0]$, $\hat \psi_0=0.1\psi^{[1]}+0.9\psi^{[3]}$, $x_0=[-17\;\;26.14]^\intercal$, $\hat x_0=[-15\;\;27]^\intercal$ and $||d||_\infty\leq 0.2$.}}

\section{Conclusion}
This work extends the theory of AOFMPC in \cite{anch} to MIMO systems subject to bounded disturbances. The proposed method uses only input and output measurements to design an adaptive observer that simultaneously estimates the plant states and parameters online, while being robust to disturbances. The estimates obtained are used in a reformulated COCP which determines the control inputs and a homothetic tube for the state estimate trajectory. By adding the sets that contain the state estimation errors to the corresponding homothetic tube sections, we obtain an outer tube containing the true state trajectory. This framework ensures both constraint satisfaction and robust exponential stability. Future efforts will focus on leveraging the adaptive design to enhance the feasibility region and reduce conservatism.

\section*{\hypertarget{App1}{Appendix I}}
The proof of Lemma \ref{L01} follows from Theorem 4.11.4 in \cite{ioannou2006adaptive}, except for a few minor technicalities. A sketch is provided below for completeness.

Using \eqref{gd1}, the parameter estimation error is rewritten as
\begin{align}\label{Ateo}
    \tilde{p}_t=\begin{cases}\tilde{p}_{t-1}-\kappa  \phi_t e_t,\;\;\;&\text{if $\bar{p}_{t}\in\Pi$}\\
    \tilde{p}_{t-1}-\kappa  \phi_t e_t+(\bar{p}_t-\bar{\xi}_t),\;\;&\text{otherwise}
    \end{cases},
\end{align}
where $ \bar{\xi}_t\triangleq \text{arg}\min_{\xi\in \Pi} ||\bar{p}_t-\xi||$. Choosing $V_t=\tilde{p}_t^{\intercal}\tilde{p}_t/2$ and using \eqref{Ateo}, we get
\begin{align}
    V_t=\;&\frac{1}{2}( \tilde{p}_{t-1}-\kappa  \phi_te_t +g_t)^{\intercal}( \tilde{p}_{t-1}-\kappa \phi_te_t +g_t)\;\;\;\;\;\;\;\; \nonumber\\
    =\;&\frac{1}{2}\tilde{p}_{t-1}^{\intercal}\tilde{p}_{t-1}+\frac{1}{2}\kappa^2e_t^{\intercal}\phi_t^{\intercal}\phi_te_t+\frac{1}{2}g_t^{\intercal}g_t \nonumber\\
    &\;\;\;-\kappa\tilde{p}_{t-1}^{\intercal}\phi_t  e_t +g_t^{\intercal}(p-\hat{p}_{t-1}  -\kappa  \phi_t e_t) \nonumber \\
    =\;&\frac{1}{2}\tilde{p}_{t-1}^{\intercal}\tilde{p}_{t-1}+\frac{1}{2}\kappa^2e_t^{\intercal}\phi_t^{\intercal}\phi_te_t+\frac{1}{2}g_t^{\intercal} g_t \nonumber\\
    &\;\;\;-\kappa\tilde{p}_{t-1}^{\intercal} \phi_t e_t +g_t^{\intercal}(p-\bar{p}_{t}), \;\;\text{[from \eqref{Ateo}]}, \nonumber\end{align} where \begin{align}
     g_t=\begin{cases}
        0, \;&\text{if }\bar{p}_t\in\Pi\\
        \bar{p}_t-\bar{\xi}_t,\;&\text{otherwise}\label{Ateth}
        \end{cases}.
\end{align}
The difference $\Delta V_t\triangleq  V_t-V_{t-1}=\kappa^2e_t^{\intercal}\phi_t^{\intercal}\phi_te_t/2+ g_t^{\intercal} g_t/2 -\kappa\tilde{p}_{t-1}^{\intercal} \phi_t e_t+g_t^{\intercal}(p-\bar{p}_{t})$. 

From \eqref{Ateth}, the terms containing $g_t$ in $\Delta V_t$ become $0$ when $\bar{p}_t\in\Pi$. For cases when $\bar{p}_t\notin\Pi$, we have 
\begin{align*}
&\;\frac{g_t^{\intercal} g_t}{2}+g_t^{\intercal}(p-\bar{p}_{t})\\
=&\;\frac{g_t^\intercal g_t}{2}-g_t^{\intercal} (\bar{p}_{t}-p)\\
=&\;\frac{g_t^\intercal g_t}{2}-g_t^{\intercal} (\bar{p}_{t} - \bar{\xi}_t+\bar{\xi}_t-p)\\
    =&\;\frac{g_t^\intercal g_t}{2}-g_t^\intercal g_t-g_t^{\intercal} (\bar{\xi}_t-p)\\
=&\;-\frac{g_t^\intercal g_t}{2}-||\bar{p}_t-\bar{\xi}_t|| ||\bar{\xi}_t-p||\cos{\theta}    \leq\;0 ,
\end{align*}{\color{black}{
where $\theta$ is the angle between the vectors $\bar{p}_t-\bar{\xi}_t$ and $\bar{\xi}_t-p$.}} 

Therefore, irrespective of the value of $g_t$, we have 
\begin{align*}
\Delta V_t\leq &\; \frac{\kappa^2e_t^{\intercal}\phi_t^{\intercal}\phi_te_t}{2}-\kappa e_t^{\intercal}\phi_t^{\intercal} \tilde{p}_{t-1}\\
=&\;\frac{\kappa^2e_t^{\intercal}\phi_t^{\intercal}\phi_te_t}{2} -\kappa e_t^{\intercal}(\Gamma_t^2e_t-CF^t\tilde{x}_0-\eta_t)\\
= &\;-\kappa e_t^{\intercal} \left(I_q- \frac{\kappa{\phi_t^{\intercal}\phi_t\Gamma^{-2}_t}}{2}\right)\Gamma_t^2e_t  +\kappa e_t^{\intercal} (CF^t\tilde{x}_0+ \eta_t)\\
\leq &\; -\kappa e_t^{\intercal}\kappa_0\Gamma_t^2e_t +\kappa ||e_t||||\Gamma_t||\Delta_0 \;(\because \phi_t^{\intercal}\phi_t\Gamma_t^{-2}\prec I_q)\\
\leq &\; -\frac{\kappa_0 e_t^{\intercal}\Gamma_t^2e_t}{2}+\frac{\kappa\Delta_0^2}{2\kappa_0},
\end{align*}
where $\kappa_0\triangleq(1-\kappa/2)$, and $\Delta_0$ is the upper bound of $||\Gamma_t^{-1}||||\eta_t+CF^t\tilde{x}_0||$ which is finite by definition of $\Gamma_t$, ${\Delta_1}$ and $\Delta_2$. The remaining analysis easily follows from Theorem 4.11.4 in \cite{ioannou2006adaptive}.

\section*{References}
\bibliographystyle{IEEEtran}
\footnotesize{\bibliography{IEEEabrv,reference}}

\end{document}